\newcolumntype{C}{>$c<$}
\providecommand{\U}[1]{\protect\rule{.1in}{.1in}}                                                                                           
\newtheorem{assumption}{Assumption}
\newtheorem{definition}{Definition}
\newtheorem{example}{Example}
\newtheorem{proof}{Proof}
\newtheorem{lemma}{Lemma}                        
\newtheorem{problem}{Problem}
\newtheorem{proposition}{Proposition}
\newtheorem{remark}{Remark}
\DeclareMathOperator*{\argmax}{arg\,max}
\let\oldIEEEkeywords\IEEEkeywords
\def\IEEEkeywords{\oldIEEEkeywords\normalfont\bfseries\ignorespaces}
\begin{document}
	
\title{Graph Temporal Logic Inference for Classification and Identification}

\author{Zhe~Xu, Alexander J Nettekoven, A. Agung~Julius, Ufuk Topcu
	\thanks{Zhe~Xu is with the Institute
		for Computational Engineering and Sciences (ICES), University of Texas,
		Austin, Austin, TX 78712, Alexander Nettekoven is with the Walker Department of Mechanical Engineering at the University of Texas at Austin, 
		Austin, TX 78712, A. Agung~Julius is with the Department of Electrical, Computer, and Systems Engineering, Rensselaer Polytechnic Institute, Troy, NY 12180, Ufuk Topcu is with the Department
		of Aerospace Engineering and Engineering Mechanics, and the Institute
		for Computational Engineering and Sciences (ICES), University of Texas,
		Austin, Austin, TX 78712, e-mail: zhexu@utexas.edu, nettekoven@utexas.edu, juliua2@rpi.edu, utopcu@utexas.edu.}
} 
\maketitle

\begin{abstract}               
	Inferring \textit{spatial-temporal properties} from data is important for many complex systems, such as additive manufacturing systems, swarm robotic systems and biological networks. Such systems can often be modeled as a labeled graph where labels on the nodes and edges represent relevant measurements such as temperatures and distances. We introduce \textit{graph temporal logic} (GTL) which can express properties such as ``whenever a node's label is above 10, for the next 3 time units there are always at least two neighboring nodes with an edge label of at most 2 where the node labels are above 5''. This paper is a first attempt to infer spatial (graph) temporal logic formulas from data for classification and identification. For classification, we infer a GTL formula that classifies two sets of \textit{graph temporal trajectories} with minimal misclassification rate. For identification, we infer a GTL formula that is \textit{informative} and is satisfied by the \textit{graph temporal trajectories} in the dataset with high probability. The informativeness of a GTL formula is measured by the information gain with respect to given prior knowledge represented by a prior probability distribution. We implement the proposed approach to classify the graph patterns of tensile specimens built from selective laser sintering (SLS) process with varying strengths, and to identify informative spatial-temporal patterns from experimental data of the SLS cooldown process and simulation data of a swarm of robots.    
\end{abstract}                                                                      

\section{Introduction}

Inferring \textit{spatial-temporal properties} from data is important in many applications (e.g., additive manufacturing processes, swarm robotics and biological networks). Consider a powder bed of selective laser sintering (i.e., SLS, one type of additive manufacturing) processes \cite{sam_thesis} modeled as a labeled graph (as shown in Fig. \ref{color_intro}). Each subregion of the powder bed is considered a node of the graph and edges exist
between nodes (subregions) within certain distance. Given the time-varying temperature labels at each node and the distance labels on each edge of the graph, we intend to infer knowledge that can characterize the spatial-temporal patterns that emerge in this process.

The representation of the inferred knowledge should be both interpretable to humans and amenable to rigorous mathematical analysis. Variants of temporal logic and spatial logic can express temporal and spatial patterns in a form that resembles natural language \cite{Kong2017}. Furthermore, such expressions are suitable for verification and controller synthesis. Over the past decade, there has been a growing interest in inferring temporal logic formulas from system trajectories \cite{Kong2017,Asarin2012,Jin13,Bombara2016,zhe2016,Hoxha2017,Neider2018,VazquezChanlatte2018LearningTS}. However, to the best of our knowledge, there has been no work on inferring spatial or spatial temporal logic formulas from data. 

Two different categories exist for inferring such spatial or temporal logics from data:  \textit{classification} and \textit{identification}. Given two sets of data, the classification problem is about constructing spatial temporal logic formulas that can classify these two sets of data with
minimal misclassification rate. The identification problem is about identifying spatial temporal logic formulas
that best \textit{fit} one set of data. 

For identification, one measure of the quality of the inferred formula is its \textit{informativeness}, i.e., the extent to which the inferred formula deviates from prior knowledge. In the example as shown in Fig. \ref{color_intro}, suppose that we are given two candidate formulas: one reads as ``for every node, either it is red or it is not red'' and the other one reads as ``for every blue node, there exist at least two red nodes among the neighbors of its neighbors with edge labels of at least 2''. While both formulas are consistent with the labeled graph, the first formula is actually a \textit{tautology} and holds for any labeled graph. In comparison, the second formula describes a specific pattern existing in this labeled graph, hence it is considered to be more \textit{informative} than the first formula.                                                                                            

\begin{figure}[th]
	\centering
	\includegraphics[width=10cm]{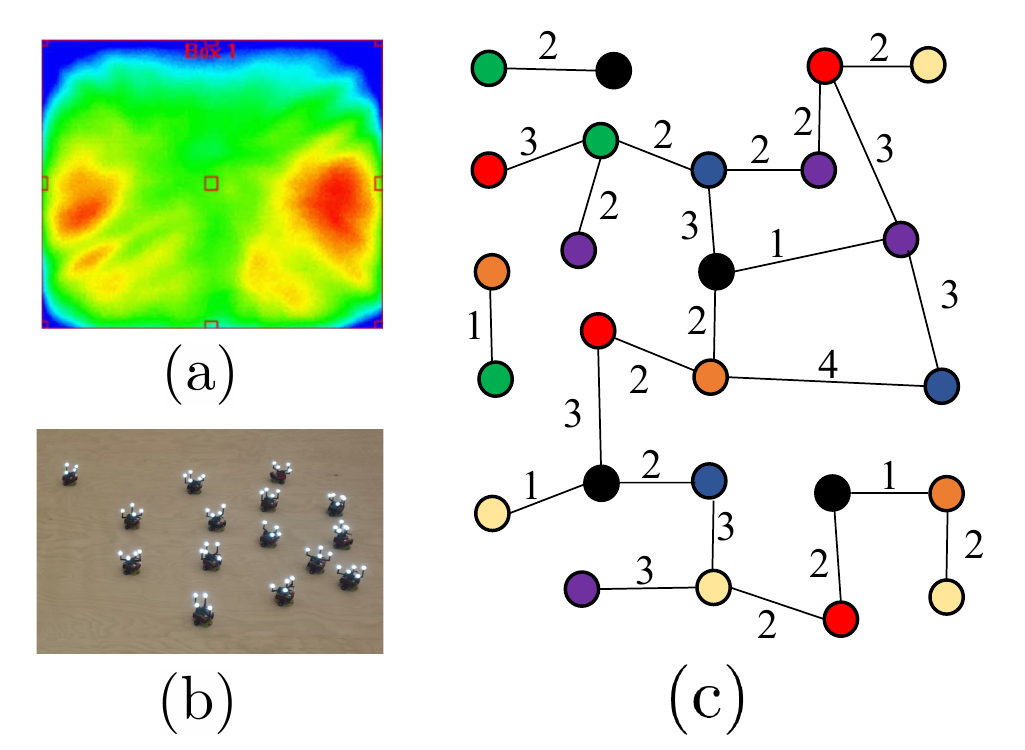}\caption{The powder bed of SLS process recorded with an infrared camera (a) \cite{sam_thesis} and a swarm of mobile robots (b) \cite{Pickem2017} can be both modeled as a labeled graph (c), where the colors indicate node labels (e.g., temperature, probabilistic density) and the numbers indicate edge labels (e.g., distance).}
	\label{color_intro}
\end{figure}                    

In this paper, we first introduce \textit{parametric graph temporal logic} (pGTL), which is an extension of parametric linear temporal logic and focuses on the spatial-temporal properties of the labels on a graph. A pGTL formula has free parameters in the predicates and operators. A \textit{graph temporal logic} (GTL) formula is then induced by a pGTL formula by assigning real values to the parameters of the pGTL formula. We study the following two problems of inferring GTL formulas from spatial-temporal data over a graph:
\begin{itemize}	
   \item \textit{Inferring GTL formulas for classification}:\\        
     We infer a GTL formula that best classifies two sets of \textit{graph-temporal trajectories} (formalized in Sec. \ref{GTL_sec}). 
   \item \textit{Inferring informative GTL formulas for identification}:\\
   We infer a GTL formula that is consistent with a set of \textit{graph-temporal trajectories} and provides a high information gain (formalized in Sec. \ref{info_sec}) over a given prior probability distribution. 
\end{itemize}                                                                                                          
In Sec. \ref{case1}, we implement the classification method to infer GTL formulas that can classify the graph patterns of tensile specimens built from selective laser sintering (SLS) process with varying strengths. In Sec. \ref{case2} and \ref{case3}, we implement the identification method to infer informative GTL formulas from experimental data of the SLS cooldown process and simulation data of a swarm of robots, respectively. 

\noindent\textbf{Related Work.} There exist several spatial (graph) temporal logics in the literature, such as spatial-temporal logic (SpaTeL) in \cite{Haghighi2015} and signal spatio-temporal logic (SSTL) in \cite{Bortolussi2014}. Our proposed GTL is different from both SpaTeL and SSTL as GTL focuses on the propositions on the node labels and edge labels of a graph, and whether there exist certain number of neighbors that satisfy the node propositions with the connecting edges satisfying the edge propositions.  GTL is also different from the logics of graphs in \cite{Courcelle97, Cardelli2002} as they consider logical statements about the structure of the graphs and the changes in the structure, while we consider logical statements about (possibly time-varying) labels that are defined on graphs with fixed structure.                                                        

Our approach of inferring GTL formulas from data is closely related to inferring temporal logic formulas from data. The work in \cite{Kong2017,Bombara2016,Neider2018} focus on inferring temporal logic formulas for classifying two sets of trajectories, while the work in \cite{Asarin2012,zhe2016,VazquezChanlatte2018LearningTS,zhe_info} focus on identifying temporal logic formulas from system trajectories.

\section{Parametric Graph Temporal Logic and Graph Temporal Logic}
\label{GTL_sec}

In this section, we introduce parametric graph temporal logic (pGTL) and graph temporal logic (GTL).           

\subsection{Node and Edge Propositions}
Let $G=(V, E)$ be an undirected graph, where $V$ is a finite set of nodes and $E$ is a finite set of edges. We use $\mathcal{X}$ to denote a (possibly infinite) set of node labels and $\mathcal{Y}$ to denote a (possibly infinite) set of edge labels. We use $s(e)=\{v_1, v_2\}$ to denote the fact that the edge $e\in E$ connects $v_1\in V$ and $v_2\in V$. $\mathbb{T}=\{1, 2, \dots\}$ is a discrete set of time indices. A graph with node labels and edge labels is also called a \textit{labeled graph}.

\begin{definition}
	\label{graph}
	A \textit{graph-temporal trajectory} on a graph $G$ is a tuple $g=(x, y)$, where $x:V\times\mathbb{T}\rightarrow \mathcal{X}$ assigns a node label for each node $v\in V$ at each time index $k\in\mathbb{T}$, and $y:E\times\mathbb{T}\rightarrow\mathcal{Y}$ assigns an edge label for each node $e\in E$ at each time index $k\in\mathbb{T}$. 
\end{definition}                                                            
                
We use $x(v, k)$ to denote the label of node $v$ at time index $k$ and $y(e, k)$ to denote the label of edge $e$ at time index $k$.

\begin{definition}
	\label{node}
	An \textit{atomic node proposition} is a predicate on $\mathcal{X}$, i.e., a Boolean valued map from $\mathcal{X}$. An \textit{edge proposition} is a predicate on $\mathcal{Y}$.
\end{definition}

We use $\pi$ to denote an atomic node proposition, and $\mathcal{O}(\pi)$ to denote the subset of $\mathcal{X}$ for which $\pi$ is true. We use $\rho$ to denote an edge proposition, and $\mathcal{O}(\rho)$ to denote the subset of $\mathcal{Y}$ for which $\rho$ is true. 

We define that a graph-temporal trajectory $g=(x, y)$ satisfies an atomic node proposition $\pi$ at a node $v$ and at a time index $k$, denoted as $(g,v,k)\models\pi$, if and only if $x(v,k) \in \mathcal{O}(\pi)$.
Similarly, we define that a graph-temporal trajectory $g=(x, y)$ satisfies an edge proposition $\rho$ at an edge $e$ and at a time index $k$, denoted as $(g,e,k)\models\rho$, if and only if $y(e,k) \in \mathcal{O}(\rho)$.

\begin{figure}[th]
	\centering
	\includegraphics[width=6cm]{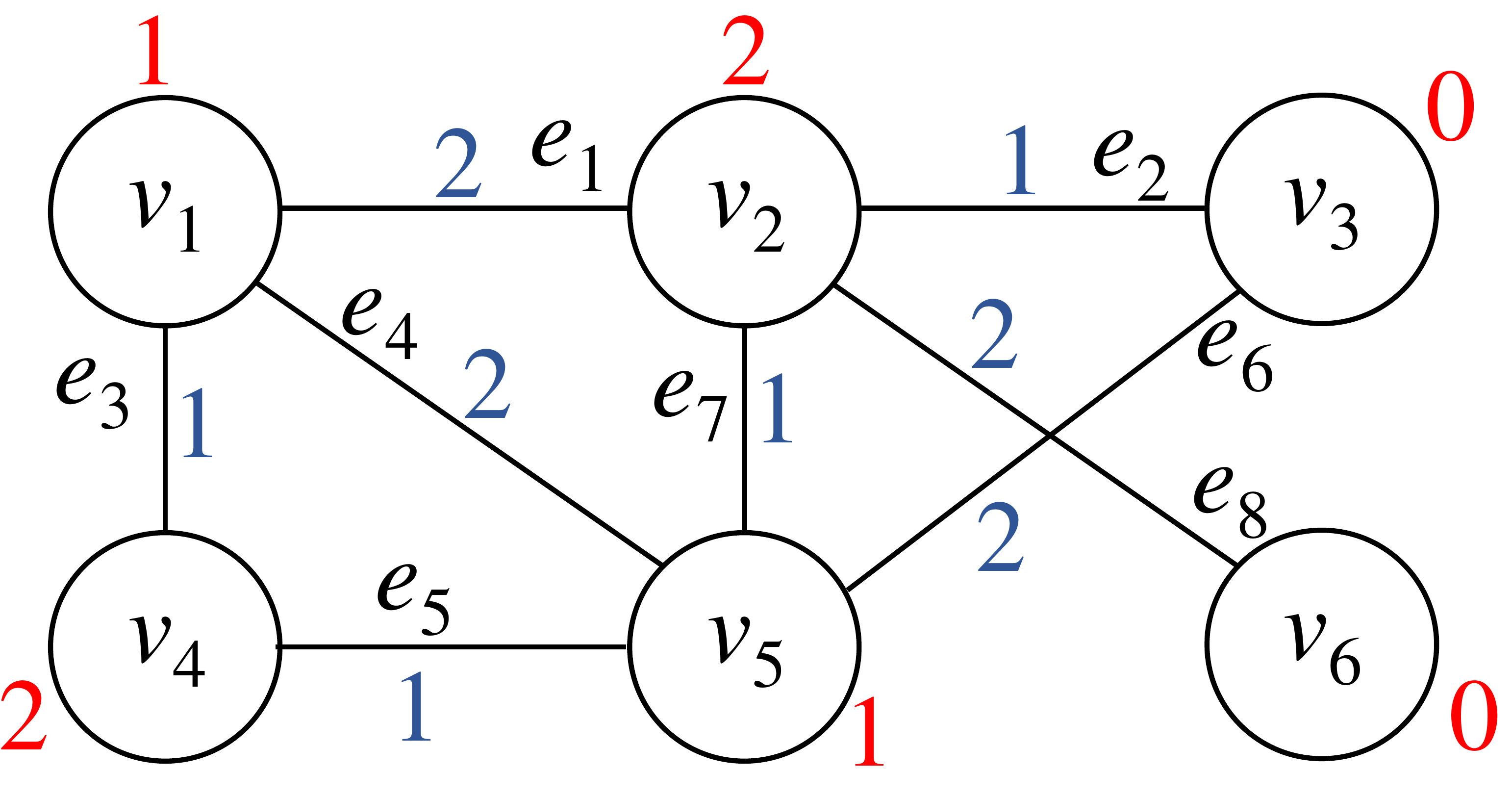}\caption{An example of a graph-temporal trajectory on an undirected graph, with the red numbers indicating node labels, and the blue numbers indicating edge labels, all at a fixed time index $k$.}
	\label{graph_exp}
\end{figure}

\begin{example}
	For the graph in Fig. \ref{graph_exp}, the node and edge labels are from a graph-temporal trajectory $g$ by fixing a time index $k$. The atomic node proposition $\pi=(x\le0)$
	is satisfied by $g$ at $v_{3}$ and $v_{6}$ at time index $k$. The edge proposition $\rho=(y\ge2)$
	is satisfied by $g$ at $e_1$, $e_{4}$, $e_{6}$ and $e_{8}$ at time index $k$.
\end{example}

\begin{definition}
	\label{neighbor}
	Given a graph-temporal trajectory $g=(x,y)$ on a graph $G$, a subset $V'\subseteq V$ of nodes and an edge proposition $\rho$, we define the \textit{neighbor operation} $\bigcirc_{\rho}$ : $2^{V}\times\mathbb{T}\rightarrow 2^{V}\times\mathbb{T}$ as
	\[
	\begin{split}
	&\bigcirc_{\rho}(V',k) =\\&\big(\{v | \exists v'\in V', \exists e\in E, s(e)=\{v',v\}, (g,e,k)\models\rho\},k\big).
	\end{split}
	\]
	Intuitively, $\bigcirc_{\rho}(V',k)$ consists of nodes that can be reached from $V'$ through an edge where the edge proposition $\rho$ is satisfied by $g$ at time index $k$. Note that neighbor operations can be applied successively. 	
\end{definition}

\begin{example}
	For the graph-temporal trajectory $g$ on the graph $G$ at time index $k$ in Fig. \ref{graph_exp},
	\[
	\begin{split}
	\bigcirc_{y\le1}(\{v_{4}\},k)&=\big(\{v_{1}, v_{5}\},k\big), \\ \bigcirc_{y\le1}\bigcirc_{y\le1}(\{v_{4}\},k)&=\bigcirc_{y\le1}\big(\{v_{1}, v_{5}\},k\big)=\big(\{v_{2}, v_{4}\},k\big).                                   
	\end{split}
	\]
\end{example}
\vskip 0.05in

\subsection{pGTL Formulas and GTL formulas}
We define the syntax of a parametric graph temporal logic (pGTL) formula $\varphi$ recursively as    
\begin{center}
	$\varphi:=\pi~|~\exists^{N}(\bigcirc_{\rho_{n}}\cdots \bigcirc_{\rho_{1}})\varphi~|~\neg\varphi~|~\varphi\wedge\varphi~|~\varphi\mathcal{U}\varphi~|~\Diamond_{\sim i}\varphi$, 
\end{center}
where $n$ and $N$ are positive integers, $\pi$ is an atomic         
node proposition, $\rho_i~(i=1,\dots,n)$ are edge propositions, $\exists^{N}(\bigcirc_{\rho_{n}}\cdots \bigcirc_{\rho_{1}})\varphi$ reads as \textquotedblleft there exists at least $N$ nodes under the neighbor operation $\bigcirc_{\rho_{n}}\cdots \bigcirc_{\rho_{1}}$ that satisfy $\varphi$ \textquotedblright, $\lnot$ and $\wedge$ stand for negation and conjunction respectively, $\mathcal{U}$ is a temporal operator representing \textquotedblleft until\textquotedblright, $\Diamond_{\sim i}$ is a parametrized temporal operator representing \textquotedblleft parametrized eventually\textquotedblright, where $\sim\in\{\ge,\le\}$ and $i\in\mathbb{T}$ is a temporal parameter. We can also derive $\vee$ (disjunction), $\Diamond$ (eventually), $\Box$ (always), $\Box_{\sim i}$ (parametrized always), $\mathcal{U}_{\sim i}$ (parametrized until) and $\Rightarrow$ (implication) from the above-mentioned operators \cite{pLTL2014}. We can also derive the parametrized temporal operators $\Diamond_{\ge i_1, \le i_2}$ and $\Box_{\ge i_1, \le i_2}$ ($i_1<i_2, i_1, i_2\in\mathbb{T})$ as
$\Diamond_{\ge i_1, \le i_2}\phi=\Diamond_{\ge i_1}\phi\wedge\Diamond_{\le i_2}\phi$ and
$\Box_{\ge i_1, \le i_2}\phi=\Box_{\ge i_1}\phi\wedge\Box_{\le i_2}\phi$.

The satisfaction relation $(g, v, k)\models\varphi$ for a graph-temporal trajectory $g$ at node $v$ at time index $k$ with respect to a pGTL formula $\varphi$ is defined recursively as follows:                                             
\[
\begin{split}
(g, v, k)\models\pi\quad\mbox{iff}\quad& x(v, k)\in\mathcal{O}(\pi)\\
(g, v, k)\models\exists^{N}(\bigcirc_{\rho_{n}}\cdots \bigcirc_{\rho_{1}})\varphi\quad\mbox{iff}\quad& \exists
v_1, \dots, v_N~(v_i\neq v_j ~\mbox{for} i\neq j), \mbox{s.t.} \\& \forall i, (v_i,k)\in\bigcirc_{\rho_{n}}\cdots \bigcirc_{\rho_{1}}(v,k)~ \mbox{and}~ (g, v_i, k)\models\varphi   \\             
(g, v, k)\models\lnot\varphi\quad\mbox{iff}\quad & (g, v, k)\not\models\varphi\\
(g, v, k)\models\varphi_{1}\wedge\varphi_{2}\quad\mbox{iff}\quad & (g, v, k)\models\varphi_{1}~\mbox{and}~(g, v, k)\models\varphi_{2}\\
(g, v, k)\models\varphi_{1}\mathcal{U}\varphi_{2}\quad\mbox{iff}\quad &  \exists
k'\ge k, \mbox{s.t.}~(g, v, k')\models\varphi_{2},\\
&  (g, v, k^{\prime\prime})\models\varphi_{1}, \forall k^{\prime\prime}\in[k, k']\\
(g, v, k)\models\Diamond_{\sim i}\varphi\quad\mbox{iff}\quad & \exists
k'\sim k+i, \mbox{s.t.}~(g, v, k')\models\varphi
\end{split}
\]
Intuitively, $\exists^{N}(\bigcirc_{\rho_{n}}\cdots \bigcirc_{\rho_{1}})\varphi$ is satisfied by a graph-temporal trajectory $g$ at a node $v\in V$ and at a time index $k$ if there exist at least $N$ nodes in $(\bigcirc_{\rho_{n}}\cdots \bigcirc_{\rho_{1}})(v,k)$ where $\varphi$ is satisfied by $g$ at time index $k$. Note that, by definition, if $(\bigcirc_{\rho_{n}}\cdots \bigcirc_{\rho_{1}})(v,k)$ consists of fewer than $N$ nodes, then $\exists^{N}(\bigcirc_{\rho_{n}}\cdots \bigcirc_{\rho_{1}})\varphi$ is false. 

We also define that a graph-temporal trajectory $g$ satisfies a pGTL formula $\varphi$ at a node $v$, denoted as $(g, v)\models\varphi$, if $g$ satisfies $\varphi$ at node $v$ at time index 1.

\begin{definition}
	We define a graph temporal logic (GTL) formula $\varphi_{\theta}$ as a pGTL formula $\varphi$ with fixed parameter valuation $\theta$.
\end{definition}

\begin{example}
	For the pGTL formula $\varphi=\exists^{N}\bigcirc_{y\le a} (x\ge b)$, we can induce a GTL formula $\varphi_{\theta}=\exists^{2}\bigcirc_{y\leq 1} (x\geq1)$ with $\theta([N,a,b])=[2,1,1]$. For the graph-temporal trajectory $g$ on the graph shown in Fig. \ref{graph_exp}, the set of nodes where $\varphi_{\theta}$ is satisfied by $g$ at time index $k$ are $\{v_{4}, v_{5}\}$. 
\end{example}

\subsection{Subtypes of pGTL and GTL Formulas}
Graph-temporal trajectories of finite time length are sufficient to satisfy (resp. violate) \textit{syntactically co-safe} (resp. \textit{safe}) pGTL formulas, which are defined as follows.

\begin{definition}
	The syntax of the \textit{syntactically co-safe} pGTL formula is defined recursively as
	\[
	\begin{split}
	\varphi:=&\top\mid\pi\mid\lnot\pi\mid\varphi_{1}\wedge\varphi_{2}\mid\varphi_{1}\vee
	\varphi_{2}\mid \Diamond\varphi\mid \varphi_1\mathcal{U}\varphi_2 \mid \Diamond_{\sim i}\varphi\\&\mid \Box_{\le i}\varphi\mid \varphi_1\mathcal{U}_{\sim i}\varphi_2~|~\exists^{N}(\bigcirc_{\rho_{n}}\cdots \bigcirc_{\rho_{1}})\varphi.
	\end{split}
	\]  
\end{definition}  
\vskip 0.05in

\begin{definition}                                                                                                                             
	The syntax of the \textit{syntactically safe} pGTL formula is defined as 
	\[
	\begin{split}
	\varphi:=&\bot\mid\pi\mid\lnot\pi\mid\varphi_{1}\wedge\varphi_{2}\mid\varphi_{1}\vee
	\varphi_{2}\mid \Box\varphi \mid \Diamond_{\le i}\varphi\mid \Box_{\sim i}\varphi\\& \mid \varphi_1\mathcal{U}_{\le i}\varphi_2~|~\exists^{N}(\bigcirc_{\rho_{n}}\cdots \bigcirc_{\rho_{1}})\varphi.
	\end{split}
	\]
\end{definition}             
\vskip 0.05in

We further introduce type-I  and type-II pGTL formulas.

(1) A type-I pGTL formula $\varphi$ is defined as
\[
\varphi:=\exists^{N}(\bigcirc_{\rho_{n}}\cdots \bigcirc_{\rho_{1}})\pi~|~\neg\varphi~|~\varphi\wedge\varphi~|~\varphi\mathcal{U}\varphi~|~\Diamond_{\sim i}\varphi.
\]

(2) A type-II pGTL formula $\varphi$ is defined recursively as
\begin{align}\nonumber
\varphi:=\exists^{N}(\bigcirc_{\rho_{n}}\cdots \bigcirc_{\rho_{1}})\phi,          
\label{template1}
\end{align}                                       
where $\phi$ is defined recursively as
\[
\phi:=\pi~|~\neg\phi~|~\phi\wedge\phi~|~\phi\mathcal{U}\phi~|~\Diamond_{\sim i}\phi.                                               
\]
 The subtypes of GTL Formulas can be defined similarly with fixed parameter valuations. A pGTL formula can be neither a type-I pGTL formula nor a type-II pGTL formula. A pGTL formula could also be of more than one subtypes. For example, if a pGTL formula is both a type-I pGTL formula and a syntactically co-safe pGTL formula, then it is called a type-I syntactically co-safe pGTL formula. 

\begin{definition} 
	A deterministic finite automaton (DFA) is a tuple $\mathcal{A}=(\mathcal{Q},q_{0},\Sigma,\delta,Acc)$ where
	$\mathcal{Q}=\{q_0, q_1, \dots, q_{K-1}\}$ is a finite set of states, $q_{0}$ is the initial state, $\Sigma$ is the alphabet, $\delta:\mathcal{Q}\times\Sigma\rightarrow\mathcal{Q}$ is the transition relation and $Acc\subseteq2^{\mathcal{Q}}$ is a finite set of accepting states \cite{KupfermanVardi2001}. 
	\label{sw}                               
\end{definition}                                                                          

We use $AP^{\textrm{I}}$ and $AP^{\textrm{II}}$ to denote the sets of \textit{atomic predicates} in the form of $\exists^{N}(\bigcirc_{\rho_{n}}\cdots \bigcirc_{\rho_{1}})\pi$ and $\pi$, respectively. At a node $v$ and over $AP^{\textrm{I}}$ (resp. $AP^{\textrm{II}}$), the \textit{word} generated by a graph-temporal trajectory $g=(x,y)$ is a sequence $\mathcal{L}^v\big(x(\cdot,1),y(\cdot,1)\big),\dots,\mathcal{L}^v\big(x(\cdot,L),y(\cdot,L)\big)$, where $\mathcal{L}^v: \mathcal{X}\times\mathcal{Y}\rightarrow2^{\mathcal{AP}^{\textrm{I}}}$ (resp. $2^{\mathcal{AP}^{\textrm{II}}}$) is a labeling function assigning a subset of atomic predicates in $\mathcal{AP}^{\textrm{I}}$ (resp. $\mathcal{AP}^{\textrm{II}}$) that hold true at node $v$ to each $\big(x(\cdot,k),y(\cdot,k)\big)$, $k\in[1,L]$.         
If a pGTL formula $\varphi$ is a type-I pGTL formula and it is syntactically co-safe (resp. safe), then we can build a DFA $\mathcal{A}^{\varphi_{\theta},v}$ (resp. $\mathcal{A}^{\lnot\varphi_{\theta},v}$) over $AP^{\textrm{I}}$ that accepts precisely the words generated by graph-temporal trajectories that satisfy (resp. violate) the GTL formula $\varphi_{\theta}$ at node $v$ for any $\theta$. If a pGTL formula $\varphi=\exists^{N}(\bigcirc_{\rho_{n}}\cdots \bigcirc_{\rho_{1}})\phi$ is a type-II pGTL formula and it is syntactically co-safe (resp. safe), we can build a DFA $\mathcal{A}^{\phi_{\theta},v}$  (resp. $\mathcal{A}^{\lnot\phi_{\theta},v}$) over $AP^{\textrm{II}}$ that accepts precisely the words generated by graph-temporal trajectories that satisfy (resp. violate) $\phi_{\theta'}$ at node $v$ for any $\theta'$ \cite{KupfermanVardi2001}.

In the following sections of the paper, we only focus on type I/type II syntactically co-safe/syntactically safe pGTL (GTL) formulas and we simply call them pGTL (GTL) formulas for conciseness. We define that a syntactically co-safe GTL formula $\varphi_{\theta}$ is violated by a graph-temporal trajectory $g$ of finite time length $L$ at a node $v$ if $\varphi_{\theta}$ is not satisfied by $g$ at node $v$;  and a syntactically safe GTL formula $\varphi_{\theta}$ is satisfied by a graph-temporal trajectory $g$ of finite time length $L$ at a node $v$ if $\varphi_{\theta}$ is not violated by $g$ at node $v$.

\section{Graph Temporal Logic Inference for Classification}
\label{classify_sec}
In this section, we present the problem formulation and solution to infer GTL formulas for classification. 

Suppose that we are given a set $\mathcal{D}^G_L=\{(g_k,l_k)\}^{N_{\mathcal{D}}}_{k=1}$ of \textit{labeled} graph-temporal trajectories of time length $L$ on a graph $G$, where the \textit{classification labels} $l_k=1$ and $l_k=-1$ represent desired and undesired behaviors, respectively. For a GTL formula $\varphi_{\theta}$, we define the \textit{satisfaction signature} $\zeta_{\varphi_{\theta}}(g_k,v)$ of a graph-temporal trajectory $g_k$ at node $v$ as follows: $\zeta_{\varphi_{\theta}}(g_k,v)=1$ if $(g_k, v)\models\varphi_{\theta}$; and $\zeta_{\varphi_{\theta}}(g_k,v)=-1$ if $(g_k, v)\not\models\varphi_{\theta}$. A labeled graph-temporal trajectory $(g_k, l_k)$ is misclassified by $\varphi_{\theta}$ at node $v$ if $\zeta_{\varphi_{\theta}}(g_k,v)\neq l_k$. 

We define the nodal misclassification rate of $\varphi_{\theta}$ in $\mathcal{D}^G_L$ as
\[
MR(\mathcal{D}^G_L, \varphi_{\theta})=\frac{\sum_{v\in V}\vert\{(g_k,l_k)\in\mathcal{D}^G_L: \zeta_{\varphi_{\theta}}(g_k,v)\neq l_k\}\vert}{\vert\mathcal{D}^G_L\vert\vert V\vert},
\]
where $\vert S \vert$ denotes the cardinality of a set $S$. 

The \textit{size} of a GTL formula $\varphi_{\theta}$, denoted as $\eta(\varphi_{\theta})$, is defined as the number of \textit{Boolean connectives} (i.e., conjunctions or disjunctions) in $\varphi_{\theta}$. Note that logically equivalent formulas may have different sizes.         

\begin{problem} 
	Given a dataset $\mathcal{D}^G_L=\{g_1, \dots,$ $g_m\}$, a prior probability distribution $\mathcal{F}^G_L$, a real constant $m_{\rm{th}}\in[0,1)$ and an integer constant $\eta_{\rm{th}}\in(0, \infty)$, construct a GTL formula $\varphi_{\theta}$ that satisfies the following two constraints:
	\begin{itemize}
		\item \textit{classification constraint}: $MR(\mathcal{D}^G_L,\varphi_{\theta})\le m_{\rm{th}}$, i.e.,
		the nodal misclassification rate should not exceed $m_{\rm{th}}$;
		\item \textit{size constraint}:
		$\eta(\varphi_{\theta})\le\eta_{\rm{th}}$, i.e., the size of $\varphi_{\theta}$ should not exceed $\eta_{\rm{th}}$. 
	\end{itemize}       
	\label{problem1}
\end{problem}

Most existing approaches for inferring temporal logic formulas for classification apply readily to solve Problem \ref{problem1}. As an example, we use the \textit{pruning and growing} approach illustrated in \cite{Kong2017}. We start from a set $\mathcal{P}$ of \textit{primitive} pGTL formulas (also called \textit{templates}), i.e., pGTL formulas that do not contain conjunctions or disjunctions. We use particle swarm optimization (PSO) \cite{PSO_Eberhart} to compute the parameter valuation $\theta$ for each primitive pGTL formula from $\mathcal{P}$ that minimizes the nodal misclassification rate $MR(\mathcal{D}^G_L,\varphi_{\theta})$. Other global optimization methods such as simulated annealing \cite{Kong2017} and Monte-carlo sampling \cite{Nghiem2010} are also valid candidates for such computations. If a GTL formula that satisfies the classification constraint is not found, we only keep the pGTL formulas in $\mathcal{P}$ such that the misclassification rates can be achieved below a threshold $\hat{m}_{\rm{th}}\in(m_{\rm{th}},1)$ (\textit{pruning}). Then we infer a GTL formula $\varphi_{\theta}$ in the form of $\varphi_{\theta}=\varphi^1_{\theta_1}\vee\varphi^2_{\theta_2}$ or $\varphi_{\theta}=\varphi^1_{\theta_1}\wedge\varphi^2_{\theta_2}$ (\textit{growing}), where $\varphi^1$ and $\varphi^2$ are chosen from the pGTL formulas kept in the first step. In this way, we keep increasing the number of primitive pGTL formulas connected with conjunctions or disjunctions until a GTL formula that satisfies the classification constraint is found, or the size constraint is violated.

\section{Graph Temporal Logic Inference for Identification}
\label{info_sec}
In this section, we present the problem formulation and solution to identify GTL formulas from data. 

\subsection{Problem Formulation}      
We use $\mathcal{B}^G_L$ to denote the set of all possible graph-temporal trajectories with time-length $L$ on the graph $G$. We are given a dataset $\mathcal{S}^G_L=\{g_1, \dots, g_m\}\subset\mathcal{B}^G_L$ as a collection of graph-temporal trajectories. We use $\mathcal{F}^G_L: \mathcal{B}^G_L\rightarrow[0,1]$ to denote a prior probability distribution over $\mathcal{B}^G_L$, and $\mathbb{P}_{\mathcal{F}^G_L,\varphi_{\theta},v}$ to denote the probability of a GTL formula $\varphi_{\theta}$ being satisfied at node $v$ based on $\mathcal{F}^G_L$, i.e.,
\[
\mathbb{P}_{\mathcal{F}^G_L,\varphi_{\theta},v}=\mathbb{P}\{(g,v)\models\varphi_{\theta}\}, g\sim\mathcal{F}^G_L.
\]

\begin{assumption}
	\label{assume}
	We assume that every graph-temporal trajectory in $\mathcal{B}^G_L$ occurs with non-zero probability based on $\mathcal{F}^G_L$. 
\end{assumption}

From Assumption \ref{assume}, for any $g\in\mathcal{B}^G_L$, if $(g, v)\models\varphi_{\theta}$, then $\mathbb{P}_{\mathcal{F}^G_L,\varphi_{\theta},v}>0$. 

\begin{definition}
	Given a prior probability distribution $\mathcal{F}^G_L$, we define $\mathcal{\bar{F}}^{\varphi_{\theta},v}_L: \mathcal{B}^G_L\rightarrow[0,1]$ as the posterior probability distribution given that the GTL formula $\varphi_{\theta}$ evaluates to true at node $v$, which is expressed as
	\[                           
	\begin{split}
	\mathcal{\bar{F}}^{\varphi_{\theta},v}_L(g)=                                    
	\begin{cases}
	\frac{\mathcal{F}^G_L(g)}{\mathbb{P}_{\mathcal{F}^G_L,\varphi_{\theta},v}} ~~~~~~~~~~~\mbox{if}~(g,v)\models\varphi_{\theta},\\ 
	0~~~~~~~~~~~~~~~~~~~~\mbox{if}~(g,v)\not\models\varphi_{\theta}.
	\end{cases} 
	\end{split}
	\]
	\label{know}
\end{definition}  
\begin{remark}
	The expression of $\mathcal{\bar{F}}^{\varphi_{\theta},v}_L$ can be directly derived using Bayes' theorem.                             
\end{remark}

\begin{definition}
	We define 
	\[                           
	\begin{split}
	\mathcal{I}(\mathcal{F}^G_L,\mathcal{\bar{F}}^{\varphi_{\theta},v}_L):=\frac{1}{L}\cdot D_{\rm{KL}}(\mathcal{\bar{F}}^{\varphi_{\theta},v}_L\vert\vert\mathcal{F}^G_L)
	\end{split}
	\]
	as the information gain when the prior probability distribution $\mathcal{F}^G_L$ is updated to the posterior probability distribution $\mathcal{\bar{F}}^{\varphi_{\theta},v}_L$, where $D_{\rm{KL}}(\mathcal{\bar{F}}^{\varphi_{\theta},v}_L\vert\vert\mathcal{F}^G_L)$
	is the Kullback-Leibler divergence from $\mathcal{F}^G_L$ to $\mathcal{\bar{F}}^{\varphi_{\theta},v}_L$.                       
	\label{KL}
\end{definition}   

\begin{remark}
	If $\varphi_{\theta}=\top$, then obviously $\mathbb{P}_{\mathcal{F}^G_L,\varphi_{\theta},v}=1$ and $\mathcal{I}(\mathcal{F}^G_L,\mathcal{\bar{F}}^{\varphi_{\theta},v}_L)=0$ for any node $v$, i.e., tautologies provide no information gain. For completeness, we also define that the information gain $\mathcal{I}(\mathcal{F}^G_L,$ $\mathcal{\bar{F}}^{\varphi_{\theta},v}_L)=0$ for any node $v$ if $\mathbb{P}_{\mathcal{F}^G_L,\varphi_{\theta},v}=0$. So if $\varphi_{\theta}=\bot$, then $\mathbb{P}_{\mathcal{F}^G_L,\varphi_{\theta},v}=0$ and $\mathcal{I}(\mathcal{F}^G_L,\mathcal{\bar{F}}^{\varphi_{\theta},v}_L)=0$ for any node $v$, i.e., contradictions provide no information gain.   
\end{remark} 

\begin{definition}
	We define $\chi(\varphi_{\theta}, \mathcal{S}^G_L)$ as the averaged proportion of nodes (in $G$) at which the GTL formula $\varphi_{\theta}$ is satisfied in the dataset $\mathcal{S}^G_L$, i.e.,
	\[
	\chi(\varphi_{\theta}, \mathcal{S}^G_L)=\frac{\sum_{i}\vert\{v\in V~\vert~(g_i,v)\models\varphi_{\theta}\}\vert}{\vert V\vert\vert\mathcal{S}^G_L\vert}.
	\]
\end{definition}             

\begin{problem} 
	Given a dataset $\mathcal{S}^G_L$, a prior probability distribution $\mathcal{F}^G_L$ and a real constant $p_{\rm{th}}\in(0,1]$, compute the parameter valuation $\theta$ for a pGTL formula $\varphi$ (selected from a set $\mathcal{P}$ of templates) that maximizes the average information gain at each node $\frac{1}{\vert V\vert}\sum_{v\in V}\mathcal{I}(\mathcal{F}^G_L,\mathcal{\bar{F}}^{\varphi_{\theta},v}_L)$ while satisfying the \textit{coverage constraint}: $\chi(\varphi_{\theta}, \mathcal{S}^G_L)\ge p_{\rm{th}}$, i.e.,
    $\varphi_{\theta}$ is satisfied for at least $p_{\rm{th}}$ proportion of nodes in average in $\mathcal{S}^G_L$.   
	\label{problem_info}                                                                                                                        
\end{problem}

Note that we focus on the parameter identification problem based on pGTL formulas selected from a set of templates and thus the size constraint is not needed. 

To solve Problem \ref{problem_info}, it is computationally inefficient to use optimzation algorithms such as PSO and compute the average information gain for each candidate GTL formula (see Sec.  \ref{sec_compute_info} for the time complexity). In Sec.  \ref{minimalSet}, we prove that the optimal parameter valuation of a GTL formula lies in the
\textit{minimal satisfying set} of parameter valuations. Thus we only need to compute the average information gain for the GTL formulas with parameter valuation in the (approximated) minimal satisfying set (see Sec. \ref{identify_pGTL}).

\subsection{Computation of Information Gain of GTL Formulas}      
\label{sec_compute_info}
In this subsection, we present the algorithm for computing the average information gain for GTL formulas. 

\begin{proposition} 
	For a GTL formula $\varphi_{\theta}$, if $\mathbb{P}_{\mathcal{F}^G_L,\varphi_{\theta},v}>0$,   
	\begin{center}                 
	$\mathcal{I}(\mathcal{F}^G_L,\mathcal{\bar{F}}^{\varphi_{\theta},v}_L) 
	=\frac{-\log{\mathbb{P}_{\mathcal{F}^G_L,\varphi_{\theta},v}}}{L}$.
	\end{center}    
	\label{exp}   
\end{proposition}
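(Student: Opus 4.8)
The plan is to compute the Kullback-Leibler divergence $D_{\rm{KL}}(\mathcal{\bar{F}}^{\varphi_{\theta},v}_L\vert\vert\mathcal{F}^G_L)$ directly from its definition, using the explicit formula for the posterior $\mathcal{\bar{F}}^{\varphi_{\theta},v}_L$ given in Definition \ref{know}. Recall that for two distributions $P, Q$ over $\mathcal{B}^G_L$ one has $D_{\rm{KL}}(P\|Q)=\sum_{g}P(g)\log\frac{P(g)}{Q(g)}$, where by the usual convention terms with $P(g)=0$ contribute $0$. First I would split the sum over $\mathcal{B}^G_L$ into the set $\mathcal{B}^+:=\{g\in\mathcal{B}^G_L:(g,v)\models\varphi_{\theta}\}$ on which the posterior is nonzero, and its complement on which the posterior vanishes and hence contributes nothing. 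On $\mathcal{B}^+$, Definition \ref{know} gives $\mathcal{\bar{F}}^{\varphi_{\theta},v}_L(g)=\mathcal{F}^G_L(g)/\mathbb{P}_{\mathcal{F}^G_L,\varphi_{\theta},v}$, so the ratio $\mathcal{\bar{F}}^{\varphi_{\theta},v}_L(g)/\mathcal{F}^G_L(g)$ equals the constant $1/\mathbb{P}_{\mathcal{F}^G_L,\varphi_{\theta},v}$, independent of $g$ — note that Assumption \ref{assume} guarantees $\mathcal{F}^G_L(g)>0$ so this ratio is well-defined.

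Then the computation collapses:
\[
D_{\rm{KL}}(\mathcal{\bar{F}}^{\varphi_{\theta},v}_L\|\mathcal{F}^G_L)=\sum_{g\in\mathcal{B}^+}\frac{\mathcal{F}^G_L(g)}{\mathbb{P}_{\mathcal{F}^G_L,\varphi_{\theta},v}}\log\frac{1}{\mathbb{P}_{\mathcal{F}^G_L,\varphi_{\theta},v}}=\frac{-\log\mathbb{P}_{\mathcal{F}^G_L,\varphi_{\theta},v}}{\mathbb{P}_{\mathcal{F}^G_L,\varphi_{\theta},v}}\sum_{g\in\mathcal{B}^+}\mathcal{F}^G_L(g),
\]
and the key observation is that $\sum_{g\in\mathcal{B}^+}\mathcal{F}^G_L(g)=\mathbb{P}\{(g,v)\models\varphi_{\theta}\}=\mathbb{P}_{\mathcal{F}^G_L,\varphi_{\theta},v}$ by the very definition of $\mathbb{P}_{\mathcal{F}^G_L,\varphi_{\theta},v}$. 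Hence the sum and the denominator cancel, leaving $D_{\rm{KL}}(\mathcal{\bar{F}}^{\varphi_{\theta},v}_L\|\mathcal{F}^G_L)=-\log\mathbb{P}_{\mathcal{F}^G_L,\varphi_{\theta},v}$. Dividing by $L$, as in Definition \ref{KL}, yields $\mathcal{I}(\mathcal{F}^G_L,\mathcal{\bar{F}}^{\varphi_{\theta},v}_L)=\frac{-\log\mathbb{P}_{\mathcal{F}^G_L,\varphi_{\theta},v}}{L}$, which is exactly the claim.

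There is no real obstacle here; the argument is essentially a one-line consequence of the fact that conditioning a distribution on an event rescales it uniformly on that event. The only points that require a modicum of care are bookkeeping ones: being explicit that the $g\notin\mathcal{B}^+$ terms drop out because $\mathcal{\bar{F}}^{\varphi_{\theta},v}_L(g)=0$ there (so the $P\log(P/Q)$ convention applies), and invoking Assumption \ref{assume} to ensure no division by zero occurs when forming the log-ratio on $\mathcal{B}^+$. One should also remark that the hypothesis $\mathbb{P}_{\mathcal{F}^G_L,\varphi_{\theta},v}>0$ is precisely what makes $\mathcal{\bar{F}}^{\varphi_{\theta},v}_L$ a genuine probability distribution and the logarithm finite, so the statement is vacuous (or covered by the extending convention in the preceding remark) when that probability is zero.
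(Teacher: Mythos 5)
Your proof is correct and follows the same route the paper intends: the paper's proof simply states the result is straightforward from Definitions \ref{know} and \ref{KL}, and your computation (constant likelihood ratio on the satisfying set, cancellation of $\mathbb{P}_{\mathcal{F}^G_L,\varphi_{\theta},v}$, division by $L$) is exactly the omitted calculation. No gaps; the remarks about the zero-probability terms and Assumption \ref{assume} are appropriate bookkeeping.
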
  
\vskip 0.1in
\begin{proof} 
	Straightforward from Definitions \ref{know} and \ref{KL}.
\end{proof} 

In the following, for computational efficiency we choose prior probability distributions such that there exist no spatial or temporal dependencies for the labels on different nodes and edges. Note that the proposed methodology readily applies to cases where such spatial or temporal dependencies exist, e.g., when the prior distribution is governed by Markov random fields (spatial dependence) or discrete-time Markov chains (temporal dependence), but the computational complexity is significantly increased (see Sec. IV-B of \cite{zhe_info} for an example for parametric linear temporal logic).

Algorithm \ref{compute} is for computing the avearge information gain for GTL formulas. We first explain how to compute the avearge information gain for a type-I syntactically co-safe GTL formula $\varphi_{\theta}$. 
We use $p_L^{\varphi_{\theta}, v}(\ell,q_{k})$ to denote the probability of a graph-temporal trajectory of time length $L$ satisfying $\varphi_{\theta}$ at node $v$, conditioned on the fact that the state of the DFA $\mathcal{A}^{\varphi_{\theta}, v}=(\mathcal{Q}^{\varphi_{\theta}, v},q^{\varphi_{\theta}, v}_{0},2^{AP^{\textrm{I}}},\delta^{\varphi_{\theta}, v},Acc^{\varphi_{\theta}, v})$ at time index $\ell$ ($1\le\ell\le L$) being the state $q_{k}$. We first initialize $p_L^{\varphi_{\theta}, v}(L,q_{k})$ as (Line \ref{init})                                                                     
\[
\begin{split}
p_L^{\varphi_{\theta}, v}(L,q_{k})=                                                            
\begin{cases}
1  ~~~~~~~~~~~~~~~\mbox{if}~q_k\in Acc^{\varphi_{\theta}, v};\\ 
0 ~~~~~~~~~~~~~~~~\mbox{otherwise}.
\end{cases}
\end{split}
\]
We can compute $p_L^{\varphi_{\theta}, v}(\ell,q_{k})$ recursively as (Line \ref{iterate})
\begin{align}\nonumber
\begin{split}
\begin{bmatrix}
p_L^{\varphi_{\theta}, v}(\ell-1,q_{0})\\
\vdots\\
p_L^{\varphi_{\theta}, v}(\ell-1,q_{K})\\	
\end{bmatrix}=
\begin{bmatrix}
c^{\ell}_{0,0}  & \dots & c^{\ell}_{0,K}\\                                                                                                
\vdots   & \vdots & \vdots \\
c^{\ell}_{K,0}  & \dots & c^{\ell}_{K,K}\\
\end{bmatrix}
\begin{bmatrix}
p_L^{\varphi_{\theta}, v}(\ell,q_{0})\\
\vdots\\
p_L^{\varphi_{\theta}, v}(\ell,q_{K})\\
\end{bmatrix},	
\end{split}                        
\end{align}
where $c^{\ell}_{j,k}$ is the probability of transitioning from $q_{j}$ to $q_{k}$ at time index $\ell$ and $c^{\ell}_{j,k}$ can be calculated based on $\mathcal{F}^G_L$ (Line \ref{calculate}).
Finally, we have $\mathbb{P}_{\mathcal{F}^G_L,\varphi_{\theta}, v}=p_L^{\varphi_{\theta}, v}(1,q_{0})$ (Line \ref{cosafe}). 

For type-I syntactically safe GTL formulas, by replacing each $\varphi_{\theta}$ in the above deductions with $\lnot\varphi_{\theta}$, we can compute $\mathbb{P}_{\mathcal{F}^G_L,\varphi_{\theta},v}=1-p_L^{\lnot\varphi_{\theta},v}(1,q_{0})$ (Line \ref{safe}). 

For a type-II GTL formula $\varphi_{\theta}=\exists^{N}(\bigcirc_{\rho_{n}}\cdots\bigcirc_{\rho_{1}})\phi_{\theta'}$, $\mathbb{P}_{\mathcal{F}^G_L,\varphi_{\theta},v}$ can be computed as (Line \ref{gamma_line})
\[                           
\begin{split}
\sum_{k=N}^{N^v_{\varphi_{\theta}}}\frac{N^v_{\varphi_{\theta}}!}{k!(N^v_{\varphi_{\theta}}-k)!}\mathbb{P}_{\mathcal{F}^G_L,\phi_{\theta'},v}^k(1-\mathbb{P}_{\mathcal{F}^G_L,\phi_{\theta'},v})^{N^v_{\varphi_{\theta}}-k},
\end{split}
\]	
where $N^v_{\varphi_{\theta}}=\vert\bigcirc_{\rho_{n}}\cdots\bigcirc_{\rho_{1}}(v,1)\vert$ is the number of nodes in the set $\bigcirc_{\rho_{n}}\cdots\bigcirc_{\rho_{1}}(v,1)$, and $\mathbb{P}_{\mathcal{F}^G_L,\phi_{\theta'},v}$ can be computed in a similar way for type-I GTL formulas.

With $\mathbb{P}_{\mathcal{F}^G_L,\varphi_{\theta},v}$, we can compute $\mathcal{I}(\mathcal{F}^G_L,\mathcal{\bar{F}}^{\varphi_{\theta},v}_L)$ according to Proposition \ref{exp} (Line \ref{last_line}).  

The time complexity of Algorithm \ref{compute} is $O(\vert V\vert LK^2)$, where $\vert V\vert$ is the number of nodes in the graph, $L$ is the time length of graph-temporal trajectories, $K$ is the number of states of $\mathcal{A}^{\varphi_{\theta},v}$ (for type-I formulas) or $\mathcal{A}^{\phi_{\theta'},v}$ (for type-II formulas).

\begin{algorithm}
	\DontPrintSemicolon
	\SetKwBlock{Begin}{function}{end function}       
	\Begin($\text{ComputeIG} {(} \mathcal{S}^G_L=\{g_1, \dots,g_m\},\varphi_{\theta},\mathcal{F}^G_L {)}$)
	{
		\If{$\varphi_{\theta}$ is a type-I GTL formula}
		{
		 $\psi\gets\varphi_{\theta}$
		} 
		\ElseIf{$\varphi_{\theta}$ is a type-II GTL formula}
		{ For $\varphi_{\theta}=\exists^{N}(\bigcirc_{\rho_{n}}\cdots \bigcirc_{\rho_{1}})\phi_{\theta'}$, $\psi\gets\phi_{\theta'}$ }  
		\For{$v\in V$}
		{Obtain the DFA $\mathcal{A}^{\psi,v}$ (resp. $\mathcal{A}^{\lnot\psi,v}$) if $\psi$ is syntactically co-safe (resp. syntactically safe)\;
	\For{$k=0$ to $K$}  
	{ Initialize $p_L^{\psi, v}(L,q_{k})$ \label{init} }	     
	\For{$\ell=L$ to 2, $j=0$ to $K$}  
	{ For each $k\in[0,K]$, calculate $c^{\ell}_{j,k}$ \label{calculate}                                                                                                       
		$p_L^{\psi, v}(\ell-1,q_{j})\gets\sum_{k=0}^{K}c^{\ell}_{j,k}p_L^{\psi, v}(\ell,q_{k})$ \label{iterate}                }                                                                  
	$\beta\gets p_L^{\psi, v}(1,q_{0})$ if $\psi$ is syntactically co-safe\; \label{cosafe}
	$\beta\gets 1-p_L^{\psi, v}(1,q_{0})$ if $\psi$ is syntactically safe\; \label{safe}  
	\If{$\psi$ is a type-I GTL formula}                                              
	{$\gamma_v\gets\beta$}
	\ElseIf{$\psi$ is a type-II GTL formula}                                                     
	{$\gamma_v\gets\displaystyle\sum_{k=N}^{N^v_{\psi}}\frac{N^v_{\psi}!}{k!(N^v_{\psi}-k)!}\beta^k(1-\beta)^{N^v_{\psi}-k}$ }    \label{gamma_line}       }                                                                       
\Return{$\mathcal{I}=-\frac{1}{\vert V\vert L}\sum_{v\in V}\log(\gamma_v)$} \label{last_line}
}
		\caption{Avearge information gain computation for GTL formulas.}                                           
		\label{compute}
\end{algorithm}	

\subsection{Minimal Satisfying Set of Parameter Valuations}
\label{minimalSet}

In this subsection, we introduce some related definitions and lemmas, leading to the result in Proposition \ref{pareto}.

\begin{definition}
 The polarity $\varrho(p, \varphi)$ of a scalar parameter $p$ for a pGTL formula $\varphi$ is defined recursively as
\begin{gather*}\nonumber
\varrho(p,\neg\varphi)=\sim\varrho(p,\varphi),~\varrho(p, f(x)\le p)=+,~ \varrho(p, f(x)\ge p)=-,\\
\varrho(p,\Diamond_{\le p}\varphi)=+\circ\varrho(p, \varphi), ~~\varrho(p,\Diamond_{\ge p}\varphi)=-\circ\varrho(p, \varphi),\\
\varrho(p,\varphi \mathcal{U}\psi)=\varrho(p, \varphi\wedge\psi)=\varrho(p, \varphi)\circ\varrho(p, \psi),\\
\varrho(p,\exists^{N}\bigcirc_{y\le p}\varphi)=+\circ\varrho(p, \varphi),\\
\varrho(p,\exists^{N}\bigcirc_{y\ge p}\varphi)=-\circ\varrho(p, \varphi),\\
\varrho(p,\exists^{p}\bigcirc_{\rho}\varphi)=-\circ\varrho(p, \rho)\circ\varrho(p, \varphi),\\
\varrho(p, \varphi)=\textrm{U}, \mbox{iff~$p$~does~not~appear~in~} \varphi,
\end{gather*}
where $f$ is some real-valued function, the operations $\sim$ and $\circ$ are as defined in the following table \cite{Asarin2012}
\begin{equation}\nonumber
\begin{array}{C|C}
&	$\sim$ \\
\hline
U & U \\
+ & - \\
- & + \\
M & M 
\end{array}\qquad
\begin{array}{C|C C C C C}
$\circ$ & U & + & - & M\\
\hline
U & U & + & - & M\\
+ & + & + & M & M\\
- & - & M & - & M\\
M & M & M & M & M
\end{array}
\end{equation}
In this table, $\textrm{U}, +, -$ and $\textrm{M}$ represent undefined, positive, negative and mixed polarities respectively.
\label{polarity}
\end{definition}

Intuitively, the polarity $\varrho(p, \varphi)$ of a scalar parameter $p$ for a pGTL formula $\varphi$ is positive, if the pGTL formula $\varphi$ is easier to be satisfied when $p$ is increased; and it is negative, if it is easier to be satisfied when $p$ is decreased.

\begin{definition}
	For a graph $G$, we say that $\theta$ \textit{dominates} $\theta'$ with respect to a pGTL formula $\varphi$, denoted as $\theta\prec_{\varphi}\theta'$, if and only if $\varphi_{\theta}\Rightarrow\varphi_{\theta'}$ holds true and $\varphi_{\theta'}\Rightarrow\varphi_{\theta}$ does not hold true at any node $v$ for GTL formulas $\varphi_{\theta}$ and $\varphi_{\theta'}$.
	\label{dominate}
\end{definition} 

For example, for the six GTL formulas induced from the same pGTL formula $\varphi$ in Fig. \ref{chain}, we have $\varphi_{\theta_2}\prec\varphi_{\theta_1}$, $\varphi_{\theta_3}\prec\varphi_{\theta_2}$, $\varphi_{\theta_4}\prec\varphi_{\theta_2}$, $\varphi_{\theta_5}\prec\varphi_{\theta_2}$, $\varphi_{\theta_6}\prec\varphi_{\theta_3}$ and $\varphi_{\theta_6}\prec\varphi_{\theta_4}$.

\begin{figure}[th]
	\centering
	\includegraphics[width=12cm]{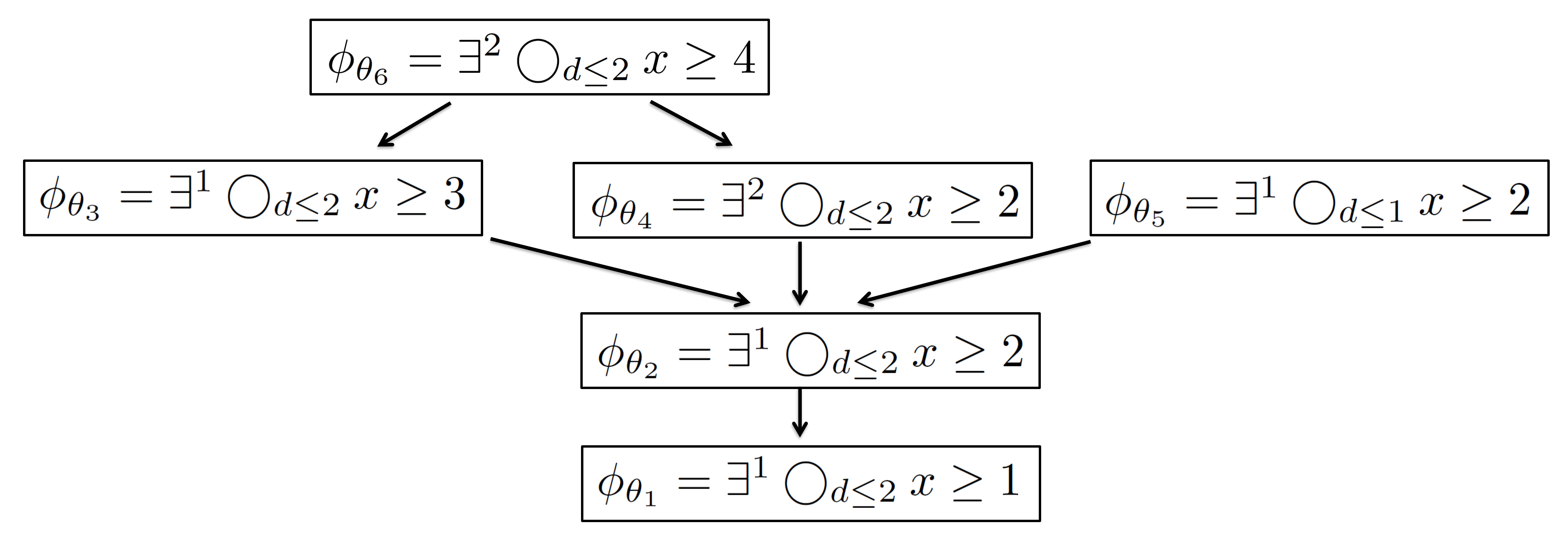}\caption{The \textit{domination} relationships among six GTL formulas.}
	\label{chain}
\end{figure} 

\begin{lemma}                                                                
	For two parameter valuations $\theta=[\theta_1,\dots,\theta_z]$ and $\theta'=[\theta'_1,\dots,\theta'_z]$, $\theta\prec_{\varphi}\theta'$ if $\theta\neq\theta'$ and the followings hold for each $i\in[1,z]$:
	\begin{align}\nonumber
	\begin{split}
	\begin{cases}
	\theta_i\le\theta'_i, ~~~~~&\mbox{if}~\varrho(\theta_i, \varphi) = +;\\
	\theta_i\ge\theta'_i, ~~~~~&\mbox{if}~\varrho(\theta_i, \varphi) = -.
	\end{cases}
	\end{split}
	\end{align} 
\end{lemma}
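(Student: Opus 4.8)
The plan is to prove this by induction on the structure of the pGTL formula $\varphi$, using the recursive definition of polarity from Definition \ref{polarity} together with the semantics of pGTL. The key claim to establish, strengthened so that the induction goes through, is the following monotonicity statement: for any node $v$ and time index $k$, if $\theta$ and $\theta'$ agree except that each coordinate $\theta_i$ is moved in the ``easier-to-satisfy'' direction dictated by $\varrho(\theta_i,\varphi)$ (i.e.\ $\theta_i \le \theta_i'$ when the polarity is $+$, $\theta_i \ge \theta_i'$ when it is $-$, arbitrary when it is $\mathrm{U}$), then $(g,v,k)\models\varphi_{\theta}$ implies $(g,v,k)\models\varphi_{\theta'}$ for every graph-temporal trajectory $g$ on $G$; in other words $\varphi_{\theta}\Rightarrow\varphi_{\theta'}$ holds at every node. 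Note that no coordinate can have mixed polarity $\mathrm{M}$ here, since each parameter appears exactly once in a (template-style) pGTL formula and the hypothesis only constrains $+$ and $-$ polarities; I would remark on this or simply treat $\mathrm{M}$ as vacuous for the formulas under consideration.

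First I would set up the base case: $\varphi = \pi$ is an atomic node proposition of the form $f(x)\le p$ or $f(x)\ge p$. For $f(x)\le p$ the polarity is $+$, and increasing $p$ enlarges $\mathcal{O}(\pi)$, so satisfaction is preserved; symmetrically for $f(x)\ge p$ with polarity $-$. The edge-proposition analogues inside $\exists^N\bigcirc_{y\le p}$ and $\exists^N\bigcirc_{y\ge p}$ behave the same way: enlarging $\mathcal{O}(\rho)$ can only enlarge the neighbor set $\bigcirc_{\rho}(V',k)$, and an $\exists^N$ quantifier over a larger set is easier to satisfy — this matches the $+\circ\varrho(p,\varphi)$ and $-\circ\varrho(p,\varphi)$ rules. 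Then I would walk through the inductive cases one operator at a time. For $\neg\varphi$: the induction hypothesis applied to $\varphi$ with the reversed polarity constraint (encoded by the $\sim$ table) gives exactly contrapositive monotonicity for $\neg\varphi$. For $\varphi_1\wedge\varphi_2$ and $\varphi_1\mathcal{U}\varphi_2$: a parameter's effective polarity is $\varrho(p,\varphi_1)\circ\varrho(p,\varphi_2)$, so if the hypothesis holds with that combined polarity it holds with each individual polarity (the $\circ$ table is designed so that if the combined entry is $+$ or $-$ both arguments carry that same sign or $\mathrm{U}$), and preservation of satisfaction for each conjunct/disjunct lifts to the compound formula; the $\mathcal{U}$ case additionally uses that the witness times $k',k''$ are unchanged. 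For $\Diamond_{\le p}\varphi$ and $\Diamond_{\ge p}\varphi$: increasing $p$ in $\Diamond_{\le p}$ enlarges the window of admissible witness times $k'\le k+p$, hence is ``$+$ at the operator level'' and is then composed with the inner polarity via $\circ$; I would spell out the $\Diamond_{\ge p}$ case as the decreasing-$p$ mirror. The $\exists^N$ and $\exists^p$ cases combine the neighbor-set monotonicity already noted with the inductive hypothesis on the body $\varphi$; for $\exists^p\bigcirc_\rho\varphi$ the parameter $p$ is the count itself, and decreasing the required number of witnesses makes the formula easier, matching the leading $-$ in the $-\circ\varrho(p,\rho)\circ\varrho(p,\varphi)$ rule.

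Once the strengthened monotonicity claim is proved, the lemma follows immediately: under the stated hypotheses on the $\theta_i$ we get $\varphi_{\theta}\Rightarrow\varphi_{\theta'}$ at every node, and since $\theta\neq\theta'$ there is at least one coordinate strictly moved, which (generically, and certainly in the intended use where polarities along a template are non-degenerate) prevents the reverse implication $\varphi_{\theta'}\Rightarrow\varphi_{\theta}$ from holding at every node — yielding $\theta\prec_{\varphi}\theta'$ by Definition \ref{dominate}. The main obstacle I anticipate is the bookkeeping for \emph{composite} polarities: making precise the fact that when $\varrho(p,\varphi_1)\circ\varrho(p,\varphi_2)\in\{+,-\}$ the hypothesis ``$\theta_i$ moved in the direction of the composite polarity'' is exactly strong enough to invoke the induction hypothesis simultaneously on both subformulas (where $p$ may appear with polarity $+$, $-$, or $\mathrm{U}$ in each). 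This requires a small case analysis on the $\circ$ table — essentially checking that each $+$ entry arises only from pairs in $\{+,\mathrm{U}\}^2$ and each $-$ entry only from pairs in $\{-,\mathrm{U}\}^2$ — but it is routine. A secondary subtlety worth a remark is the strictness of the domination relation (the ``$\Rightarrow$ does not hold'' half of Definition \ref{dominate}); I would note that it holds whenever the relevant polarities are genuinely $+$ or $-$ (not $\mathrm{U}$) on the moved coordinates, which is the case of interest, and is witnessed by exhibiting a single graph-temporal trajectory separating $\varphi_{\theta'}$ from $\varphi_{\theta}$.
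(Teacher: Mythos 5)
Your proposal is correct and takes essentially the same approach as the paper: the paper's entire proof of this lemma is ``Straightforward from Definitions \ref{polarity} and \ref{dominate},'' and your structural induction (monotonicity in the direction dictated by polarity, with the $\sim$ and $\circ$ tables handling negation and composite occurrences) is precisely the routine verification that one-liner appeals to. You even flag the one point the paper glosses over --- the strictness half of Definition \ref{dominate}, which needs a trajectory separating $\varphi_{\theta'}$ from $\varphi_{\theta}$ when a coordinate of genuine polarity $+$ or $-$ moves strictly --- so your write-up is, if anything, more careful than the paper's.
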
  
 
\begin{proof}
	Straightforward from Definitions \ref{polarity} and \ref{dominate}.                                                  
\end{proof}

\begin{lemma}
	For two parameter valuations $\theta$ and $\theta'$, if $\theta\prec_{\varphi}\theta'$ for a pGTL formula $\varphi$, then $\mathcal{I}(\mathcal{F}^G_L,\mathcal{\bar{F}}^{\varphi_{\theta},v}_L)>\mathcal{I}(\mathcal{F}^G_L,\mathcal{\bar{F}}^{\varphi_{\theta'},v}_L)$ for any $v$ and $\mathcal{F}^G_L$. 
	\label{info_compare}
\end{lemma}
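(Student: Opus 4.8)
The plan is to work directly from the closed-form expression for information gain established in Proposition~\ref{exp}. Since $\mathcal{I}(\mathcal{F}^G_L,\mathcal{\bar{F}}^{\varphi_{\theta},v}_L)=\frac{-\log \mathbb{P}_{\mathcal{F}^G_L,\varphi_{\theta},v}}{L}$ whenever $\mathbb{P}_{\mathcal{F}^G_L,\varphi_{\theta},v}>0$, and since $t\mapsto -\log t$ is strictly decreasing on $(0,1]$, the claimed strict inequality in the information gain is equivalent to the strict inequality $\mathbb{P}_{\mathcal{F}^G_L,\varphi_{\theta},v}<\mathbb{P}_{\mathcal{F}^G_L,\varphi_{\theta'},v}$ on satisfaction probabilities. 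So the whole lemma reduces to showing that domination $\theta\prec_{\varphi}\theta'$ forces a strict drop in the probability of satisfaction at every node $v$ and for every prior $\mathcal{F}^G_L$.

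First I would unpack Definition~\ref{dominate}: $\theta\prec_{\varphi}\theta'$ means that at node $v$ the implication $\varphi_{\theta}\Rightarrow\varphi_{\theta'}$ holds (i.e.\ for every graph-temporal trajectory $g$, $(g,v)\models\varphi_{\theta}$ implies $(g,v)\models\varphi_{\theta'}$), while the reverse implication fails. The forward implication gives the event $\{g:(g,v)\models\varphi_{\theta}\}\subseteq\{g:(g,v)\models\varphi_{\theta'}\}$, hence $\mathbb{P}_{\mathcal{F}^G_L,\varphi_{\theta},v}\le\mathbb{P}_{\mathcal{F}^G_L,\varphi_{\theta'},v}$ immediately, for any prior. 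To upgrade $\le$ to $<$, I would use the failure of the reverse implication: there exists some graph-temporal trajectory $g^\ast\in\mathcal{B}^G_L$ with $(g^\ast,v)\models\varphi_{\theta'}$ but $(g^\ast,v)\not\models\varphi_{\theta}$. This $g^\ast$ witnesses an element of the set difference $\{g:(g,v)\models\varphi_{\theta'}\}\setminus\{g:(g,v)\models\varphi_{\theta}\}$. By Assumption~\ref{assume}, every trajectory in $\mathcal{B}^G_L$ — in particular $g^\ast$ — occurs with strictly positive probability under $\mathcal{F}^G_L$, so the probability of the set difference is at least $\mathcal{F}^G_L(g^\ast)>0$, giving $\mathbb{P}_{\mathcal{F}^G_L,\varphi_{\theta},v}<\mathbb{P}_{\mathcal{F}^G_L,\varphi_{\theta'},v}$.

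Finally I would handle the boundary cases in the logarithm. If $\mathbb{P}_{\mathcal{F}^G_L,\varphi_{\theta'},v}=1$ then $\varphi_{\theta'}$ holds at $v$ for all trajectories of positive probability, hence (by Assumption~\ref{assume}) for all of $\mathcal{B}^G_L$; combined with $\varphi_{\theta}\Rightarrow\varphi_{\theta'}$ and the strict inclusion just shown, $\mathbb{P}_{\mathcal{F}^G_L,\varphi_{\theta},v}<1$, so $-\log\mathbb{P}_{\mathcal{F}^G_L,\varphi_{\theta},v}>0=-\log\mathbb{P}_{\mathcal{F}^G_L,\varphi_{\theta'},v}$, consistent with the claim. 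It remains to check the degenerate possibility $\mathbb{P}_{\mathcal{F}^G_L,\varphi_{\theta},v}=0$: then, by the convention in the remark after Definition~\ref{KL}, $\mathcal{I}(\mathcal{F}^G_L,\mathcal{\bar{F}}^{\varphi_{\theta},v}_L)=0$, but I must verify the strict inequality still reads correctly. Here one needs $\mathbb{P}_{\mathcal{F}^G_L,\varphi_{\theta'},v}>0$, which follows because $g^\ast$ above satisfies $\varphi_{\theta'}$ at $v$ and has positive probability (Assumption~\ref{assume}); so $\mathcal{I}(\mathcal{F}^G_L,\mathcal{\bar{F}}^{\varphi_{\theta'},v}_L)=\frac{-\log\mathbb{P}_{\mathcal{F}^G_L,\varphi_{\theta'},v}}{L}>0=\mathcal{I}(\mathcal{F}^G_L,\mathcal{\bar{F}}^{\varphi_{\theta},v}_L)$ — wait, this is actually backwards from what the lemma wants, so I would instead argue that $\mathbb{P}_{\mathcal{F}^G_L,\varphi_{\theta},v}=0$ cannot happen together with a \emph{nontrivial} $\varphi_\theta$; more carefully, the lemma's convention and Proposition~\ref{exp} together mean the inequality is really a statement about the regime $\mathbb{P}_{\mathcal{F}^G_L,\varphi_{\theta},v}>0$, and the set-inclusion-plus-positive-mass argument is exactly what is needed there.

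The main obstacle I anticipate is entirely bookkeeping rather than mathematical depth: the lemma as stated hides an implicit assumption that $\mathbb{P}_{\mathcal{F}^G_L,\varphi_{\theta},v}>0$ (so that Proposition~\ref{exp} applies on both sides), and I would either make that hypothesis explicit or note that $\theta\prec_\varphi\theta'$ together with Assumption~\ref{assume} does not by itself rule out $\mathbb{P}_{\mathcal{F}^G_L,\varphi_{\theta},v}=0$, so the clean statement is: under the domination relation, $\mathbb{P}_{\mathcal{F}^G_L,\varphi_{\theta},v}<\mathbb{P}_{\mathcal{F}^G_L,\varphi_{\theta'},v}$ always, and consequently $\mathcal{I}(\mathcal{F}^G_L,\mathcal{\bar{F}}^{\varphi_{\theta},v}_L)>\mathcal{I}(\mathcal{F}^G_L,\mathcal{\bar{F}}^{\varphi_{\theta'},v}_L)$ whenever the left-hand information gain is defined via Proposition~\ref{exp}. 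Everything else — the monotonicity of $-\log$, the set inclusion from the implication, the strict gap from the witness trajectory and Assumption~\ref{assume} — is routine.
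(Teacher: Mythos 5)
Your proposal is correct and follows essentially the same route as the paper's proof: use $\theta\prec_{\varphi}\theta'$ to get the event inclusion $\{g:(g,v)\models\varphi_{\theta}\}\subseteq\{g:(g,v)\models\varphi_{\theta'}\}$, obtain strictness from a witness trajectory with positive probability via Assumption~\ref{assume}, and then transfer $\mathbb{P}_{\mathcal{F}^G_L,\varphi_{\theta},v}<\mathbb{P}_{\mathcal{F}^G_L,\varphi_{\theta'},v}$ to the information gains through Proposition~\ref{exp}. Your extra bookkeeping about the degenerate case $\mathbb{P}_{\mathcal{F}^G_L,\varphi_{\theta},v}=0$ (where the paper's convention sets the information gain to zero and the stated inequality would not follow) is a fair observation that the paper's own proof silently skips, but it does not change the substance of the argument.
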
                

\begin{proof}
	As $\theta\prec_{\varphi}\theta'$, we have $\varphi_{\theta}\Rightarrow\varphi_{\theta'}$ holds true and $\varphi_{\theta'}\Rightarrow\varphi_{\theta}$ does not hold true at any node $v$. Thus, for any node $v$ and any graph-temporal trajectory $g\in\mathcal{B}^G_L$, if $(g,v)\models\varphi_{\theta}$, then $(g,v)\models\varphi_{\theta'}$. Besides, for any node $v$ there exists at least one graph-temporal trajectory $\hat{g}\in\mathcal{B}^G_L$ such that $(\hat{g},v)\models\varphi_{\theta'}$ and $(\hat{g},v)\not\models\varphi_{\theta}$. From Assumption \ref{assume}, $\hat{g}$ occurs with non-zero probability based on $\mathcal{F}^G_L$. Therefore, $\mathbb{P}_{\mathcal{F}^G_L,\varphi_{\theta},v}<\mathbb{P}_{\mathcal{F}^G_L,\varphi_{\theta'},v}$. 
	Then according to Proposition \ref{exp}, $\mathcal{I}(\mathcal{F}^G_L,\mathcal{\bar{F}}^{\varphi_{\theta},v}_L)>\mathcal{I}(\mathcal{F}^G_L,\mathcal{\bar{F}}^{\varphi_{\theta'},v}_L)$ for any $v$ and $\mathcal{F}^G_L$.  
\end{proof}

For a pGTL formula $\varphi$, we denote by $\Theta_{sat}$ the set of parameter valuations $\theta$ such that $\varphi_{\theta}$ satisfies the coverage constraint. We further denote $\Theta_{unsat}:=\Theta\setminus\Theta_{sat}$.

\begin{definition}
	A parameter valuation $\theta$ in $\Theta_{sat}$ is said to be \textit{minimal} if there does not exist a parameter valuation $\theta'\in\Theta_{sat}$ such that $\theta'\prec_{\varphi}\theta$. We define the \textit{minimal satisfying set} $\Theta_s$ as the set of minimal parameter valuations in $\Theta_{sat}$.
	\label{pareto_optimal}
\end{definition} 

\begin{proposition} 
	For a prior distribution $\mathcal{F}^G_L$ and $\Theta_s$, if $\theta^{\ast}=\argmax\limits_{\theta\in\Theta_{sat}}\frac{1}{\vert V\vert}\sum_{v\in V}\mathcal{I}(\mathcal{F}^G_L,\mathcal{\bar{F}}^{\varphi_{\theta},v}_L)$, then $\theta^{\ast}\in\Theta_s$.
	\label{pareto}                                     
\end{proposition}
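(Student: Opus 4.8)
The plan is to argue by contradiction using Lemma \ref{info_compare}. Suppose $\theta^{\ast}$ maximizes the average information gain over $\Theta_{sat}$ but $\theta^{\ast}\notin\Theta_s$. By Definition \ref{pareto_optimal}, since $\theta^{\ast}\in\Theta_{sat}$ but is not minimal, there exists some $\theta'\in\Theta_{sat}$ with $\theta'\prec_{\varphi}\theta^{\ast}$. First I would invoke Lemma \ref{info_compare}, which gives $\mathcal{I}(\mathcal{F}^G_L,\mathcal{\bar{F}}^{\varphi_{\theta'},v}_L)>\mathcal{I}(\mathcal{F}^G_L,\mathcal{\bar{F}}^{\varphi_{\theta^{\ast}},v}_L)$ for every node $v$. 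Averaging this strict inequality over all $v\in V$ (a finite set, with $|V|\ge 1$) preserves strictness, so $\frac{1}{\vert V\vert}\sum_{v\in V}\mathcal{I}(\mathcal{F}^G_L,\mathcal{\bar{F}}^{\varphi_{\theta'},v}_L)>\frac{1}{\vert V\vert}\sum_{v\in V}\mathcal{I}(\mathcal{F}^G_L,\mathcal{\bar{F}}^{\varphi_{\theta^{\ast}},v}_L)$.

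This contradicts the assumption that $\theta^{\ast}$ is a maximizer over $\Theta_{sat}$, since $\theta'$ is also in $\Theta_{sat}$ and achieves a strictly larger objective value. Hence $\theta^{\ast}\in\Theta_s$, completing the argument. One small point I would be careful to state: the argmax is assumed to exist (the statement posits $\theta^{\ast}$ as such a maximizer), so no compactness or attainment argument is needed here — the proposition is conditional on a maximizer being given.

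The only real subtlety — and the step I expect to need the most care — is making sure Lemma \ref{info_compare} applies cleanly: it requires $\theta'\prec_{\varphi}\theta^{\ast}$ in the sense of Definition \ref{dominate}, i.e.\ $\varphi_{\theta'}\Rightarrow\varphi_{\theta^{\ast}}$ everywhere and the converse implication failing at every node. This is exactly the content of "$\theta'$ is not dominated-above by... " in Definition \ref{pareto_optimal}, so the chain is: $\theta^{\ast}$ not minimal $\Rightarrow$ $\exists\,\theta'\in\Theta_{sat}$ with $\theta'\prec_{\varphi}\theta^{\ast}$ $\Rightarrow$ Lemma \ref{info_compare} $\Rightarrow$ strictly larger pointwise (hence averaged) information gain $\Rightarrow$ contradiction. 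Everything else is a routine finite average of strict inequalities, so the proof is short.
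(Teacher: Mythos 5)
Your proof is correct and follows essentially the same route as the paper's: a contradiction argument that extracts $\theta'\prec_{\varphi}\theta^{\ast}$ from Definition \ref{pareto_optimal}, applies Lemma \ref{info_compare} nodewise, and sums (averages) the strict inequalities over the finite node set $V$ to contradict maximality. Your additional remarks on the existence of the maximizer and on the applicability of Lemma \ref{info_compare} are fine but not needed beyond what the paper does.
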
 

\begin{proof}
	We prove Proposition \ref{pareto} by contradiction. If $\theta^{\ast}\not\in\Theta_s$, then from Definition \ref{pareto_optimal} there exists $\theta\in\Theta_{sat}$ such that $\theta\neq\theta^{\ast}$ and $\theta\prec_{\varphi}\theta^{\ast}$. Then from Lemma \ref{info_compare} we have $\mathcal{I}(\mathcal{F}^G_L,\mathcal{\bar{F}}^{\varphi_{\theta},v}_L)>\mathcal{I}(\mathcal{F}^G_L,\mathcal{\bar{F}}^{\varphi_{\theta^{\ast}},v}_L)$ for any node $v$. Thus, we have $\sum_{v\in V}\mathcal{I}(\mathcal{F}^G_L,\mathcal{\bar{F}}^{\varphi_{\theta},v}_L)>\sum_{v\in V}\mathcal{I}(\mathcal{F}^G_L,\mathcal{\bar{F}}^{\varphi_{\theta^{\ast}},v}_L)$. But as $\theta^{\ast}$ $=\argmax\limits_{\theta\in\Theta_{sat}}\frac{1}{\vert V\vert}\sum_{v\in V}\mathcal{I}(\mathcal{F}^G_L,\mathcal{\bar{F}}^{\varphi_{\theta},v}_L)$, we have $\sum_{v\in V}$ $\mathcal{I}(\mathcal{F}^G_L,\mathcal{\bar{F}}^{\varphi_{\theta},v}_L)\le\sum_{v\in V}\mathcal{I}(\mathcal{F}^G_L,\mathcal{\bar{F}}^{\varphi_{\theta^{\ast}},v}_L)$. Contradiction.
\end{proof}

From Proposition \ref{pareto}, it can be seen that the optimal parameter valuation belongs to the minimal satisfying set.

\subsection{Information-Guided Identification of GTL Formulas}    
\label{identify_pGTL}
In this subsection, we present the algorithm for the information-guided identification of GTL formulas.

For a pGTL formula $\varphi$, suppose that the parameter valuation $\theta=[\theta_1,\dots,\theta_z]$ belongs to a set $\Theta=[\theta^{\textrm{min}}_1,\theta^{\textrm{max}}_1]\times\dots\times[\theta^{\textrm{min}}_z,\theta^{\textrm{max}}_z]$, where $\theta^{\textrm{min}}_i\le\theta^{\textrm{max}}_i (i=1,\dots,z)$. We define the mapping $\Pi:\Theta\rightarrow[0,1]^z$ as follows:
for each $\theta\in\Theta$, $\Pi(\theta)=\omega=[\omega_1,\dots,\omega_z]$, where for each $i\in[1,z]$, we have 
\[
\begin{split}
\omega_i=                                                            
\begin{cases}
(\theta_i-\theta^{\textrm{min}}_i)/(\theta^{\textrm{max}}_i-\theta^{\textrm{min}}_i)  ~~~~~\mbox{if}~\varrho(\theta_i, \varphi) = +;\\ 
(\theta^{\textrm{max}}_i-\theta_i)/(\theta^{\textrm{max}}_i-\theta^{\textrm{min}}_i)  ~~~~~\mbox{if}~\varrho(\theta_i, \varphi) = -.
\end{cases}
\end{split}
\]
In this way, we transform the set of parameter valuations $\Theta$ to the hypercube $[0,1]^z$, and for each $i$ the pGTL formula is easier to be satisfied if $\omega_i$ is increased. Under this transformation, we use $\Omega_s$ to denote the set of parameter valuations transformed from the parameter valuations in the minimal satisfying set $\Theta_s$ (we also call $\Omega_s$ the \textit{transformed} minimal satisfying set).                                                                    

\begin{definition}
	The \textit{Hausdorff directed distance} from a set $S$ to a set $S'$ is defined as 
	\[
	\hat{d}_H(S, S')=\max_{s\in S}\min_{s'\in S'}\max_{1\le i\le z}d(s_{i}',s_{i}), 
	\]
	\label{Hausdorff}
\end{definition} 
\vspace{-10pt}
where $s=[s_1, \dots, s_z]\in S$, the directed distance $d(s_{i}',s_{i})=s_{i}-s_{i}'$, if $s_{i}>s_{i}'$; and $d(s_{i}',s_{i})=0$, if $s_{i}\le s_{i}'$ ($i=1,\dots,z$).

For a pGTL formula $\varphi$, suppose that after several queries we obtain the sets $\hat{\Omega}_{unsat}$ and $\hat{\Omega}_{sat}$ of parameter valuations, where for each $\omega\in\hat{\Omega}_{sat}$, $\varphi_{\Pi^{-1}(\omega)}$ satisfies the coverage constraint; and for each $\omega\in\hat{\Omega}_{unsat}$, $\varphi_{\Pi^{-1}(\omega)}$ violates the coverage constraint (see Fig. \ref{query_fig}). We use $\hat{\Omega}_{s}$ to denote the set of minimal parameter valuations in $\hat{\Omega}_{sat}$.                                                             

\begin{definition}($\epsilon$-approximation \cite{Julien2010})
	A set of parameter valuations $\hat{\Omega}_{s}$ is an $\epsilon$-approximation of the set $\Omega_s$ if $\hat{d}_H(\Omega_s, \hat{\Omega}_{s})\le\epsilon$, where $\hat{d}_H(\Omega_s, \hat{\Omega}_{s})$ is the Hausdorff directed distance from $\Omega_s$ to $\hat{\Omega}_{s}$.
	\label{near}
\end{definition}                                                                                                                           

\begin{proposition}\cite{Julien2010}
	A set $\hat{\Omega}_{s}$ of parameter valuations is an $\epsilon$-approximation of the transformed minimal satisfying set $\Omega_s$ if $\hat{d}_H(knee(\hat{\Omega}_{unsat}), \hat{\Omega}_{s})\le\epsilon$, where $knee(\hat{\Omega}_{unsat})$ denotes the set of \textit{knee points} of $\hat{\Omega}_{unsat}$ (a point in the boundary of $\hat{\Omega}_{unsat}$ is called a \textit{knee point} if by subtracting
	a positive number from any of its coordinates we obtain a point in the interior of $\hat{\Omega}_{unsat}$).
	\label{bd}
\end{proposition}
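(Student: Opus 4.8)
The plan is to establish the claim by showing that once $\hat d_H(knee(\hat\Omega_{unsat}), \hat\Omega_s)\le\epsilon$, every point $\omega^*\in\Omega_s$ lies within directed distance $\epsilon$ of some point of $\hat\Omega_s$, which is exactly the definition of $\epsilon$-approximation in Definition \ref{near}. The conceptual engine is the monotonicity of satisfaction with respect to the transformed parameters: after applying $\Pi$, the pGTL formula becomes easier to satisfy as any coordinate $\omega_i$ increases, so the transformed satisfying region is \emph{upward closed} and the unsatisfying region $\Omega_{unsat}$ is \emph{downward closed}. This is the analogue in our setting of the monotonicity exploited in \cite{Asarin2012, Julien2010}, and it follows from Definition \ref{polarity} together with the construction of $\Pi$ in Sec. \ref{identify_pGTL}. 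I would first record this monotonicity as the key structural fact, and also observe the corresponding property for the queried sets $\hat\Omega_{sat}$ and $\hat\Omega_{unsat}$ (being unions of up-sets / down-sets of sampled points, up to the sampling resolution used to define them).

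Next I would relate the true boundary $\Omega_s$ to the queried data. Since $\Omega_{unsat}$ is downward closed, its knee points $knee(\Omega_{unsat})$ coincide with the ``lower boundary'' that separates $\Omega_{unsat}$ from $\Omega_{sat}$, and $\Omega_s$ — the minimal elements of the transformed satisfying set — sits just above this boundary. Because $\hat\Omega_{unsat}\subseteq\Omega_{unsat}$ and $\hat\Omega_{sat}\subseteq\Omega_{sat}$ (every query gives a correct label), we get $knee(\hat\Omega_{unsat})$ lying weakly below $knee(\Omega_{unsat})$ in the coordinatewise order, and every minimal satisfying point $\omega^*\in\Omega_s$ is coordinatewise dominated by (or equal to the ``successor'' of) some knee point of $\hat\Omega_{unsat}$. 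Chaining the directed-distance inequalities: for $\omega^*\in\Omega_s$ there is a knee point $p\in knee(\hat\Omega_{unsat})$ with $d(p_i,\omega^*_i)$ small (controlled by how tightly $\hat\Omega_{unsat}$ has filled in below $\Omega_{unsat}$), and by hypothesis there is $\hat\omega\in\hat\Omega_s$ with $\max_i d(\hat\omega_i, p_i)\le\epsilon$; the directed distance $d$ satisfies a triangle-type inequality in the relevant direction ($d(a,c)\le d(a,b)+d(b,c)$ when all differences are taken in the positive part), so $\max_i d(\hat\omega_i,\omega^*_i)\le\epsilon$ plus the residual gap. The cleanest route is to invoke Proposition \ref{bd} as a black box (it is cited from \cite{Julien2010}) and merely verify its hypotheses hold in our transformed coordinates — namely that $\Pi$ maps the GTL coverage-constraint satisfaction problem to exactly the monotone parameter-synthesis setting of \cite{Julien2010}. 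Indeed, the statement of this proposition is itself attributed to \cite{Julien2010}, so the proof reduces to checking that our $\Omega_s$, $\hat\Omega_{sat}$, $\hat\Omega_{unsat}$ and the directed Hausdorff distance of Definition \ref{Hausdorff} instantiate the objects in that reference.

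The main obstacle I anticipate is not the inequality chasing but making the reduction rigorous: one must argue that the coverage constraint $\chi(\varphi_{\Pi^{-1}(\omega)},\mathcal S^G_L)\ge p_{\rm th}$ defines a monotone (upward closed) subset of the hypercube $[0,1]^z$. This needs the observation that $\chi$ is nondecreasing along the $\prec_\varphi$-order — which follows because $\theta\prec_\varphi\theta'$ implies $(g,v)\models\varphi_\theta\Rightarrow(g,v)\models\varphi_{\theta'}$ at every node (Definition \ref{dominate}), hence the node-count in the numerator of $\chi$ can only grow — combined with the Lemma relating the coordinatewise order under $\Pi$ to $\prec_\varphi$. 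Once monotonicity of $\Theta_{sat}$ (equivalently $\Omega_s$ being a genuine lower boundary) is in hand, the rest is the standard argument of \cite{Julien2010}: I would state it in two or three lines, citing that reference for the details, and note that the directed distance in Definition \ref{Hausdorff} is precisely the one under which their guarantee is phrased. A secondary subtlety worth a sentence is the edge case where $\Theta_{sat}=\emptyset$ or $=\Theta$ (no knee points, or $knee(\hat\Omega_{unsat})=\emptyset$), where the statement holds vacuously.
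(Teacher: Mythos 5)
The paper does not actually prove Proposition \ref{bd}: it is imported directly from \cite{Julien2010}, and all the surrounding text supplies is the setting (the polarity-based transformation $\Pi$, the sets $\hat{\Omega}_{sat}$, $\hat{\Omega}_{unsat}$, $\hat{\Omega}_{s}$, and Definition \ref{Hausdorff}) in which that result is applied. Your ``cleanest route'' --- check that the transformed problem is monotone, i.e.\ that $\Omega_{sat}$ is upward closed in $[0,1]^z$ because $\chi$ is nondecreasing along $\prec_{\varphi}$ (Definition \ref{dominate} plus the polarity lemma and the construction of $\Pi$), and then invoke \cite{Julien2010} --- is therefore exactly what the paper does, and your explicit verification that the coverage constraint defines an upward-closed set is a worthwhile detail the paper leaves implicit.

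Your middle paragraph, however, does not stand on its own as a proof, and you should not present it as one. The hypothesis controls only $\hat{d}_H(knee(\hat{\Omega}_{unsat}),\hat{\Omega}_{s})$; it says nothing about how densely $\hat{\Omega}_{unsat}$ has filled in $\Omega_{unsat}$. Your chaining step needs every $\omega^{\ast}\in\Omega_s$ to be coordinatewise dominated by (or within controlled directed distance of) some knee point of $\hat{\Omega}_{unsat}$, and that does not follow from the definitions alone: a minimal satisfying point only has to avoid $\hat{\Omega}_{unsat}$ and not lie strictly above a queried satisfying point, so it can sit in a portion of the unknown region far above every knee point. For instance, in two dimensions with unsatisfying queries at $(0.2,0.7)$ and $(0.5,0.3)$ and a satisfying query at $(0.25,0.35)$, no knee point of $\hat{\Omega}_{unsat}$ has second coordinate above $0.7$, yet a true minimal satisfying point at $(0.1,0.9)$ is consistent with all queries and dominates no knee point. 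The ``residual gap'' you mention is precisely the quantity the proposition does not allow; eliminating it relies on the structure of the query scheme and the bookkeeping of both approximations in \cite{Julien2010}, not on the stated inequality by itself. So keep the reduction to \cite{Julien2010} (with the monotonicity check) as the proof, and either drop the direct distance-chasing sketch or supply the missing covering argument from that reference.
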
 

The algorithm for the information-guided parametric identification of GTL formulas is shown in Algorithm \ref{inference_alg}. The identification is performed in two steps. In the first step (Line \ref{1s} to Line \ref{1e}), we approximate the minimal satisfying set $\Theta_s$ (transformed minimal satisfying set $\Omega_s$) by iteratively querying the parameter space and updating the sets of parameter valuations that satisfy and violate the coverage constraint, respectively. In the second step (Line \ref{2s} to Line \ref{2e}), we compute the parameter valuation that maximizes the avearge information gain with the parameter valuations chosen from the approximated minimal satisfying set.

We first initialize $\hat{\Omega}_{sat}$ and $\hat{\Omega}_{s}$ to be the vector of ones, and $\hat{\Omega}_{unsat}$ to be the vector of zeros. We use $r$ to denote the maximal directed distance from the set of knee points $knee(\hat{\Omega}_{unsat})$ to the set $\hat{\Omega}_{s}$ (Line \ref{max_r}), and $\omega$ to denote the knee point with the maximal directed distance to the set $\hat{\Omega}_{s}$ (Line \ref{select_alg}). After each query, we select the next query as $\omega+r/2$, which is guaranteed to lie neither in $\hat{\Omega}_{sat}$ nor in $\hat{\Omega}_{unsat}$ (we choose $\omega+r/2$ in the manner of binary search). Then we update $\hat{\Omega}_{sat}$ if $\varphi_{\Pi^{-1}(\omega+r/2)}$ satisfies the coverage constraint, and update $\hat{\Omega}_{unsat}$ if it violates the coverage constraint \cite{Julien2010} (Line \ref{update_omega}). And the same procedure repeats until an $\epsilon$-approximation of $\Omega_s$ is achieved. Finally, we identify the GTL formula with the parameter valuation in the approximated minimal satisfying set that provides the highest avearge information gain (Line \ref{2s} to Line \ref{2e}).

\begin{figure}[th]
	\centering
	\includegraphics[width=10cm]{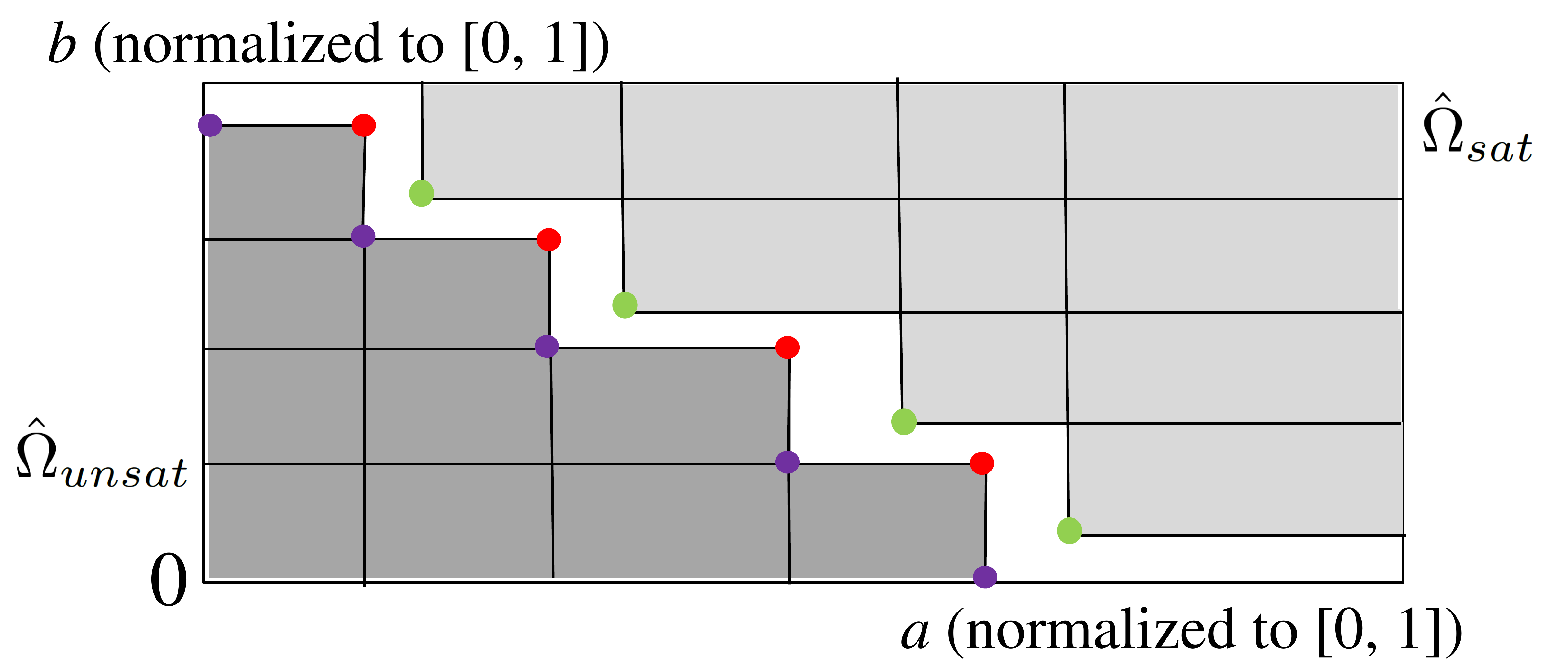}\caption{Querying the parameter space for a pGTL formula $\exists^{N}\bigcirc_{y\le a}(x\le b)$ (with $a$ and $b$ dimensions shown). Green and red points correspond to parameter valuations of GTL formulas that satisfy and violate the coverage constraint respectively, while the purple points denote the knee points.}
	\label{query_fig}
\end{figure}                                               

\begin{algorithm}
	\DontPrintSemicolon
	{
		Initialize $\hat{\Omega}_{sat}$, $\hat{\Omega}_{s}$, $\hat{\Omega}_{unsat}$, $\hat{\mathcal{I}}\gets0$\;   \label{1s}
		\While{$\hat{d}_H(knee(\hat{\Omega}_{unsat}), \hat{\Omega}_{s})>\epsilon$} 
		{$r\gets\max\limits_{\omega\in knee(\hat{\Omega}_{unsat})}d(\omega,\hat{\Omega}_{s})$\;   \label{max_r}
		$\omega\gets\argmax\limits_{\omega\in knee(\hat{\Omega}_{unsat})}d(\omega,\hat{\Omega}_{s})$, $\omega\gets\omega+r/2$\;  \label{select_alg}  \label{max_kneepoint}
		Add $\omega$ to $\hat{\Omega}_{sat}$ if $\varphi_{\Pi^{-1}(\omega)}$ satisfies the coverage constraint and add $\omega$ to $\hat{\Omega}_{unsat}$ otherwise\;   \label{update_omega}
		Update $\hat{\Omega}_{s}$ as the minimal satisfying set of $\hat{\Omega}_{sat}$  \label{1e}  }
		\For{$\omega\in\hat{\Omega}_{s}$ \label{2s}} 
	    { $\mathcal{I}\gets ComputeIG(\mathcal{S}^G_L,\varphi_{\Pi^{-1}(\omega)}, \mathcal{F}^G_L)$\;
		\If{$\mathcal{I}>\hat{\mathcal{I}}$}
		{ $\hat{\mathcal{I}}\gets\mathcal{I}$, $\hat{\omega}\gets\omega$  } \label{2e} }
		\Return{$\varphi_{\Pi^{-1}(\hat{\omega})}$}	} 
	\caption{Information-guided parameter identification of GTL formulas.}                                           
   \label{inference_alg}
\end{algorithm}	  

\section{Case Studies}
We illustrate our approaches on three studies, with Case Study 1 on the classification problem, and Case Study 2 and 3 on the identification problem. The data used in Case Study 1 and 2 are from the SLS 3D printer at UT Austin, recorded with FLIR 6701 MWIR stationary Infrared camera.

\subsection{Case Study 1}
\label{case1}

The first case study is on classifying the graph patterns of ten tensile specimens built from SLS process (see Fig. \ref{classify_fig}). The tensile specimens have varying strengths, where the five stronger specimens labeled 1 have tensile strength above 46 MPa and the other five labeled -1 have tensile strength below 34 MPa. 
\begin{figure}[th]
	\centering
	\includegraphics[width=6cm]{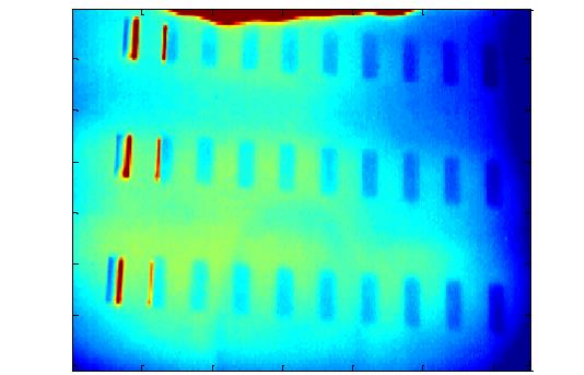}\caption{Infrared image of the rectangular cross-sections of the tensile specimens on the surface of the SLS powder bed \cite{sam_thesis}.}
	\label{classify_fig}
\end{figure}         

We partition the fill region of each tensile specimen into 20 subregions (1200$\times$1200 \SI{}{\micro\metre}$^2$ for each subregion, with 210 layers), where each subregion is considered a node of a fully connected graph. The edge label $y$ represents the Euclidean distance between the nodes (1 unit represents 1200 \SI{}{\micro\metre}). As different layers are sintered at evenly spaced time instants, we use the layer indices to represent the time indices.
                              
We use the following templates for type-I pGTL formulas (while other valid pGTL formulas can be also added to the set of templates, we choose the following ones as they are simple and sufficiently expressive for our applications):  
	\vspace{-5pt}
\begin{align}\nonumber
\begin{split}
\mathcal{P}^{\textrm{I}}=\{&\Box_{\ge i_1, \le i_2}\exists^{N}\bigcirc_{\rho}\pi,~~~~~~~~~~
\Diamond_{\ge i_1, \le i_2}\exists^{N}\bigcirc_{\rho}\pi,\\
&\Box_{\ge i_1, \le i_2}\Diamond_{\le i_3}\exists^{N}\bigcirc_{\rho}\pi,~~~~
\Diamond_{\ge i_1, \le i_2}\Box_{\le i_3}\exists^{N}\bigcirc_{\rho}\pi,\\
&\Box(\pi_1\Rightarrow\Box_{\le i}\exists^{N}\bigcirc_{\rho}\pi_2),~~~\Box(\pi_1\Rightarrow\Diamond_{\le i}\exists^{N}\bigcirc_{\rho}\pi_2)\},
\end{split}
\end{align}
where $\pi$, $\pi_1$ and $\pi_2$ are atomic node propositions in the form of $x\ge c_1$ or $x\le c_1$ ($c_1\in\mathbb{R}$), $\rho$ is an edge proposition in the form of $y\le c_2$ ($c_2$ is a positive integer), $N$ is a positive integer, $i_1, i_2,i_3, i\in\mathbb{T}$ and $i_1<i_2$.                    

We use the following templates for type-II pGTL formulas:
\begin{align}\nonumber
\begin{split}
\mathcal{P}^{\textrm{II}}=\{& \exists^{N}\bigcirc_{\rho}\Box_{\ge i_1, \le i_2}\pi,~~~~~~~~
\exists^{N}\bigcirc_{\rho}\Diamond_{\ge i_1, \le i_2}\pi,\\
&\exists^{N}\bigcirc_{\rho}\Box_{\ge i_1, \le i_2}\Diamond_{\le i_3}\pi,~~
\exists^{N}\bigcirc_{\rho}\Diamond_{\ge i_1, \le i_2}\Box_{\le i_3}\pi
\},
\end{split}
\end{align}
where $\pi$, $\rho$, $N$, $i_1$, $i_2$ and $i_3$ are as described in $\mathcal{P}^{\textrm{I}}$.                        

We set $m_{\rm{th}}=0.02$, $\eta_{\rm{th}}=3$ and $\hat{m}_{\rm{th}}=0.1$. Using the approach illustrated in Sec. \ref{classify_sec}, we obtain the following GTL formula from $\mathcal{P}^{\textrm{I}}$ and $\mathcal{P}^{\textrm{II}}$ with zero nodal misclassification rate:
\begin{align}\nonumber
\begin{split}
\varphi_{\theta^{\ast},1}^{\ast}=&\Box\big(T\ge181.1\Rightarrow\Box_{\le6}\exists^{8}\bigcirc_{y\le2}T\le198.0\big)\wedge\\&\Box\big(T\le204.0\Rightarrow\Box_{\le2}\exists^{3}\bigcirc_{y\le1}T\ge181.8\big),
\end{split}
\end{align}
which means ``(for any node) whenever the temperature is at least 181.1 $\textrm{C}^{\circ}$, then for the next 6 layers there are always at least 8 nodes within distance of 2 where the temperature is at most 198.0 $\textrm{C}^{\circ}$; whenever it is at most 204.0 $\textrm{C}^{\circ}$, then for the next 2 layers there are always at least 3 nodes within distance of 1 where the temperature is at least 181.8 $\textrm{C}^{\circ}$''.

$\varphi_{\theta^{\ast},1}^{\ast}$ is validated with another set of ten tensile specimens (five labeled 1 with tensile strength above 43 MPa and five labeled -1 with tensile strength below 37 MPa), with nodal misclassification rate of 8.33\%.

\subsection{Case Study 2}   
\label{case2}
The second case study is on identifying informative patterns from data of SLS cooldown process (see Fig. \ref{color_intro} (a) in Sec. I). We record 16 graph temporal trajectories from $7\times7$ grids of the powder bed, where each of the 49 cells (400$\times$400 \SI{}{\micro\metre}$^2$ for each cell) is considered a node of a fully connected graph. The edge label $y$ represents the Euclidean distance between the nodes (1 unit represents 400 \SI{}{\micro\metre}). 

We set $p_{\rm{th}}=0.98$  and $\epsilon=0.05$. Through Algorithm \ref{inference_alg}, we obtain the best type-I and  type-II GTL formulas from $\mathcal{P}^{\textrm{I}}$ and $\mathcal{P}^{\textrm{II}}$ as (with the average information gain of 0.1563 and 0.0013, respectively, both with coverage rate of 100\%):
\[
\begin{split}
\varphi_{\theta^{\ast},2}^{\textrm{I}\ast}&=\Box\big(T\ge183.4\Rightarrow\Box_{\le3}\exists^{2}\bigcirc_{y\le1}T\le182.8\big),\\
\varphi_{\theta^{\ast},2}^{\textrm{II}\ast}&=\exists^{4}\bigcirc_{y\le2}\Box_{\ge3,\le8}(T\ge178.7),
\end{split}  
\]
where $\varphi_{\theta^{\ast},2}^{\textrm{I}\ast}$ means ``(for any node) whenever the temperature is at least 183.4 $\textrm{C}^{\circ}$, then for the next 3 time steps there are always at least 2 nodes within distance of 1 where the temperature is at most 182.8 $\textrm{C}^{\circ}$'', and $\varphi_{\theta^{\ast},2}^{\textrm{II}\ast}$ reads as ``(for any node) there exists at least 4 nodes within distance of 2 where the temperature is always at least 178.7 $\textrm{C}^{\circ}$ from time step 3 to 8'' (each time step lasts for 33 milliseconds).

$\varphi_{\theta^{\ast},2}^{\textrm{I}\ast}$ and $\varphi_{\theta^{\ast},2}^{\textrm{II}\ast}$ are validated with another set of 16 graph temporal trajectories recorded from another layer of the powder bed, both with coverage rate of 100\%.

\subsection{Case Study 3}   
\label{case3}
The third case study is on identifying informative patterns from simulation data of a swarm of robots. We partition the workspace into 9 subregions (as shown in Fig. \ref{swarm_fig}), where each subregion is considered a node of a fully connected graph. The edge label $y$ represents the Euclidean distance between the centroids of the subregions. The probabilistic densities of the robots in the subregions are governed by a time-varying Markov chain \cite{Demirer2018}. 

\begin{figure}[th]
	\centering
	 \includegraphics[width=8cm]{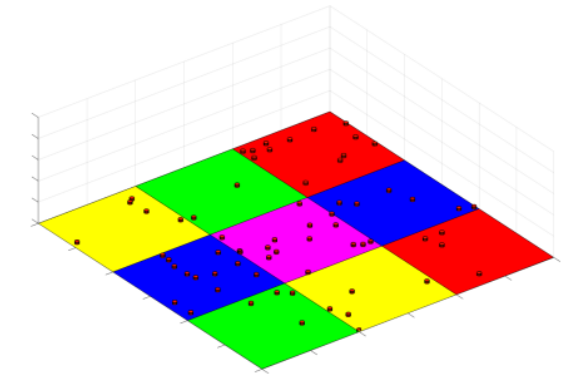}\caption{The swarm of 72 robots in the 9 sub-regions.}
	\label{swarm_fig}
\end{figure}      

We randomly generate graph-temporal trajectories and randomly choose 10 from them that satisfy the following constraint: whenever the probabilistic density of a subregion reaches above 1/8, then for the next 2 time units there always exists at least one neighbor subregion within distance of 1 with probabilistic density below 1/9. Then we infer a GTL formula from the 10 graph-temporal trajectories using Algorithm \ref{inference_alg}. 

We set $p_{\rm{th}}=0.98$  and $\epsilon=0.05$. Through Algorithm \ref{inference_alg}, we obtain the best type-I and type-II GTL formulas from $\mathcal{P}^{\textrm{I}}$ and $\mathcal{P}^{\textrm{II}}$ as (with the average information gain of 0.1 and 0.0043, respectively, both with coverage rate of 100\%): 
\[
\begin{split}
\varphi_{\theta^{\ast},3}^{\textrm{I}\ast}&=\Box\big(x\ge0.1894\Rightarrow\Box_{\le2}\exists^{1}\bigcirc_{y\le1}x\le0.1137\big),\\
\varphi_{\theta^{\ast},3}^{\textrm{II}\ast}&=\exists^{2}\bigcirc_{y\le1}\Box_{\ge4,\le6}(x\ge0.0379), 
\end{split}  
\]                 
where $x$ is the probabilistic density in a subregion. It can be seen that $\varphi_{\theta^{\ast},3}^{\textrm{I}\ast}$ is different but similar with the set constraint.

$\varphi_{\theta^{\ast},3}^{\textrm{I}\ast}$ and $\varphi_{\theta^{\ast},3}^{\textrm{II}\ast}$ are validated with another set of 10 randomly generated graph temporal trajectories that satisfy the set constraint, both with coverage rate of 100\%.

\section{Conclusion}
We have introduced GTL and proposed the framework and algorithms to infer GTL formulas from data for classification and identification. 
For future work, we will consider more efficient methods for inferring more general forms of GTL formulas. In various network systems, specifications can be expressed in GTL, hence verification and controller synthesis can be also conducted with GTL specifications. 

\bibliographystyle{IEEEtran}
\bibliography{zherefclean_submit}

\end{document}